\newcommand{\cmark}{\ding{51}}%
\newcommand{\xmark}{\ding{55}}%
\providecommand{\U}[1]{\protect\rule{.1in}{.1in}}
\newtheorem{theorem}{Theorem}[section]
\newtheorem{corollary}{Corollary}[section]
\newtheorem{definition}{Definition}[section]
\newtheorem{lemma}{Lemma}[section]
\newtheorem{problem}{Problem}[section]
\newtheorem{remark}{Remark}[section]
\newenvironment{proof}[1][Proof]{\noindent\textbf{#1.} }{\ \rule{0.5em}{0.5em}}
\numberwithin{equation}{section}
\numberwithin{equation}{section}
\newcommand{\be}{\begin{equation}}
\newcommand{\ee}{\end{equation}}
\newcommand{\bq}{\begin{eqnarray}}
\newcommand{\eq}{\end{eqnarray}}
\begin{document}

\title{Exploratory Mean-Variance Portfolio Optimization with Regime-Switching Market Dynamics}
\author{Yuling Max Chen \thanks{Department of Statistics and Actuarial Science, University of
Waterloo, Waterloo ON N2L 3G1, Canada (yuling.chen@uwaterloo.ca)}
\and Bin Li\thanks{Department of Statistics and Actuarial Science, University of
Waterloo, Waterloo ON N2L 3G1, Canada (bin.li@uwaterloo.ca)}
\and David Saunders \thanks{Department of Statistics and Actuarial Science, University of
Waterloo, Waterloo ON N2L 3G1, Canada (dsaunder@uwaterloo.ca).}}
\date{{\small \today}}
\maketitle

\begin{center}
{\large \bf Abstract}
\end{center}

Considering the continuous-time Mean-Variance (MV) portfolio optimization problem, we study a regime-switching market setting and apply reinforcement learning (RL) techniques to assist informed exploration within the control space. We introduce and solve the {\bf E}xploratory {\bf M}ean {\bf V}ariance with {\bf R}egime {\bf S}witching (EMVRS) problem. We also present a Policy Improvement Theorem. 
Further, we recognize that the widely applied Temporal Difference (TD) learning is not adequate for the EMVRS context, hence we consider Orthogonality Condition (OC) learning, leveraging the martingale property of the induced optimal value function from the analytical solution to EMVRS. We design a RL algorithm that has more meaningful parameterization using the market parameters and propose an updating scheme for each parameter. Our empirical results demonstrate the superiority of OC learning over TD learning with a clear convergence of the market parameters towards their corresponding ``grounding true" values in a simulated market scenario. In a real market data study, EMVRS with OC learning outperforms its counterparts with the highest mean and reasonably low volatility of the annualized portfolio returns. 


{\bf Keywords: Mean-Variance Portfolio Optimization, Regime Switching, Stochastic Control, Reinforcement Learning}

\vspace{-0.5cm}


\newpage

\section{Introduction} \label{sec:introduction}

Mean-Variance (MV) portfolio optimization has been widely studied since being introduced in \citet{markowitz1952portfolio}. 
In the continuous-time setting, an investor with the classical MV objective aims to simultaneously maximize their expected portfolio value and minimize the portfolio volatility at the end of the investment horizon. This has been studied as a stochastic linear-quadratic problem in \citet{zhou2000continuous}, followed by \citet{chiu2006asset} who solved for the MV efficient frontier, and \citet{xie2008continuous} who derived the optimal investment policy under an incomplete market setting; see \citet{zhang2018portfolio, kalayci2019comprehensive} for a broader overview of the the past work in the MV literature.  

Amongst many variants of the MV problem, this paper focuses on the regime-switching setting. 
Past work such as \citet{maheu2000identifying, ang2002international, ang2004regimes} have demonstrated that incorporating a regime-switching model better fits real market data and that ignoring market regimes in investment has a cost. 
\citet{yin2004markowitz} considered a discrete-time MV problem with regimes modelled on an aggregated process and asymptotically derived the optimal portfolio selection strategy from the optimal solution of its continuous-time counterpart. 
\citet{wu2011multi, wu2014multi} solved regime-switching MV portfolio allocation problems with an uncertain investment horizon, where the time to exit the market is modeled by a conditional distribution given the market regime. 




Recent research in stochastic control \citep{wang2020continuous, jiang2022reinforcement, jia2022policy, jia2023q, dai2023learning, denkert2024control, wu2024reinforcement} 
has shown the advantage of adopting RL techniques into classical stochastic control problems. The basic strategy is to employ the Stochastic RL Algorithm introduced in \citet{gullapalli1990stochastic}, that replaces the optimal solution to the classical stochastic control problem with a probability distribution. By introducing stochasticity to the optimal control solution, we enable exploration within the control space, while maintaining exploitation towards optimality. This allows the induced optimal policy distribution to be more robust against the randomness of the dynamic environment. \citet{wang2020reinforcement} introduced the ``exploratory extension" of the portfolio value dynamic and solved for the corresponding exploratory optimal solution to the linear-quadratic problem, where they proved the asymptotic equivalence between the classical solution and the exploratory solution. Later, \citet{wang2020continuous} introduced the {\it Exploratory Mean-Variance (EMV)} problem and solved for the optimal investment strategy, which is a probability distribution over the control space rather than a deterministic control function. 


In this paper, we study the RL-facilitated EMV problem by \citet{wang2020continuous} in 
the context of a regime-switching process, named as the {\bf E}xploratory {\bf M}ean {\bf V}ariance with {\bf R}egime {\bf S}witching (EMVRS) problem. 
The continuous-time MV portfolio optimization problem with regime-switching 
but without reinforcement learning has previously 
been studied by \citet{zhou2003markowitz}. 
On top of that, we design a value-based RL algorithm that reparameterizes the analytical solution to the EMVRS problem as the value function. In contrast to defining the value function as a randomly initialized neural network or polynomial approximator such as in \citep{jiang2017deep, wen2021reinforcement, wu2024reinforcement}, we configure a RL model whose parameters all have practical meanings. 
Finally, we applied the martingale property of the deduced optimal EMVRS value function, specifically the {\it Orthogonality Condition} studied by \citet{JMLR:v23:21-0947}, to the updating scheme of the parameters in the RL algorithm. 
In contrast to TD learning, this achieves convergence to the true parameters in our simulation study.

\citet{wu2024reinforcement} also studied the RL-facilitated MV portfolio selection problem with a regime switching setting. \citet{wu2024reinforcement} formulated the problem as a Partially Observable Markov Decision Process with two unobservable market regimes, estimating the regimes using the  
 the Wonham filter (Eq. (10) of \citet{wu2024reinforcement}). 
There three major differences between our work and \citet{wu2024reinforcement}. Firstly, \citet{wu2024reinforcement} 
assume that the volatility is regime-independent, while we 
allow for regime-dependent volatility. Secondly, the RL algorithm in \citet{wu2024reinforcement} requires the market 
parameters\footnote{The parameters of the market dynamics and regime dynamics correspond to $\Tilde{\sigma}, \Tilde{mu}_i, \Tilde{\lambda}_i$ for $i = 1,2$ in Algorithm 1 of \citet{wu2024reinforcement}.} to be either given or estimated from the data, and selects the Martingale Loss (first introduced by \citet{JMLR:v23:21-0947}) to update the RL model parameters. We notice from \citet{JMLR:v23:21-0947} that the Martingale Loss requires the knowledge of the true market parameters, which is unavailable in the real market. 
According to the Mean-Blur problem arguments in \citet{luenberger2013investment},
it is possible that replacing the market parameters with their estimates can cause the RL model to update towards an erroneous target. 
Therefore, we choose the Orthogonality Condition Loss (also introduced by \citet{JMLR:v23:21-0947}), that does not depend on the true market parameters while simultaneously applies the martingality properties of our analytical solutions. As a result, our RL algorithm does not require the knowledge of the market parameters, instead, it learns the market parameters through training. Thirdly, by parameterizing the RL model with the market parameters, we show in our simulation study that the parameters of our proposed RL model can converge to their corresponding true values, with randomly chosen starting point.

\bigskip 

The remainder of the paper is structured as follows. In 
Section \ref{sec:preliminary}, we present background material and the problem setup. Then, we introduce the EMVRS problem, provide an analytical solution, and establish a Policy Improvement Theorem in Section \ref{sec:methodology}. 
In Section \ref{sec:algorithm}, 
we present a reparameterization of the EMVRS problem, and 
various algorithms for updating the market parameters. 
We present numerical results on simulated and real market data in Section \ref{sec:simulation}.\footnote{The EMVRS model is developed in Python and the source code is available on \href{https://github.com/MaxGniluynehc/EMVRS/}{GitHub}.} 
Section \ref{sec:conclusion} 
presents conclusion and directions for future research.

\section{Preliminaries} \label{sec:preliminary}

In this section, we review the Lagrangian dual formulation of  Markowtiz's Mean-Variance (MV) Problem with a regime-switching market, followed by a stochastic control solution to this problem that does not involve any RL techniques. This was originally done by \citet{zhou2003markowitz}. The Lagrangian dual formulation of the MV problem was also applied in \citet{wang2020continuous}, although they did not consider regime-switching. In the successive sections, we refer to the problem of \citet{zhou2003markowitz} as the {\it Mean-Variance with Regime-Switching (MVRS) problem}, which is fundamental to our proposed method. 

\subsection{Problem Formulation} \label{sec:problem-formulation}
Consider an investor who manages a portfolio with an investment horizon $T > 0$. For ease of presentation, the market is simplified to consist of one risky asset, the stock $\{S_t\}_{t\in[0,T]}$, and one risk-free asset, the bond $\{B_t\}_{t\in[0,T]}$. Denote by $\{W_t\}_{t \in [0,T]}$ a one-dimensional Brownian Motion defined on a filtered probability space $(\Omega, \mathcal{F}, \{\mathcal{F}_t\}_{t\in[0,T]}, \mathbb{P})$ that satisfies the usual conditions. 

Recognizing that the market has ``good" and ``bad" states, we further denote $\alpha_t$ as the regime of the market at time $t \in [0,T]$. For any time $t \in [0,T]$, $\alpha_t$ takes a value from the set $\{1,\cdots, l\}$, where $l$ is the total number of the market regimes. 
We model the market regimes using a continuous-time, adapted, stationary and time-homogeneous Markov Chain with transition matrix $P = (p_{ij}(t))_{i = 1, \cdots, l}^{j = 1, \cdots, l}$, where $p_{ij}(t) := \mathbb{P}(\alpha_t = j | \alpha_0 = i)$, the probability of transitioning from regime $i$ at time 0 to regime $j$ at time $t$. The Markov Chain generator $Q = (q_{ij})_{i = 1, \cdots, l}^{j = 1, \cdots, l}$ is defined as $q_{ij} = \lim_{t\to 0+} t^{-1}(p_{ij}(t) - \delta_{ij})$ with $\delta_{ij}$ the Kronecker delta. It is so-defined such that the transition matrix is the matrix exponential of the generator, i.e., $P(t) = e^{tQ}$. Moreover, we assume that the regime's Markov Chain is independent of the Brownian Motion $W$. 

The dynamics of the stock and the bond are driven by two stochastic processes 
\begin{align}
    & dS_t = S_t \left\{\mu(t, \alpha_t) dt + \sigma(t, \alpha_t dW_t \right\}, \text{ with } S_0 > 0 \label{eq:problem-formulation-stock-dynamics} \\
    & dB_t = r(t, \alpha_t) B_t dt, \text{ with } B_0 > 0 \label{eq:problem-formulation-bond-dynamics}
\end{align}
where $\mu(t,i) \in \mathbb{R}$ and $\sigma(t,i) \in \mathbb{R}_+$ are respectively the mean and volatility of the stock return and $r(t,i)\in \mathbb{R}_+$ is the risk-free interest rate, at time $t \in [0,T]$ in market regime $i \in \{1,\cdots, l\}$. 

At each time $t \in [0,T]$, denote $X_t$ as the investor's portfolio value. The investor reallocates their portfolio by investing the amount $u_t$ in the stock and $X_t - u_t$ in the bond. Under the self-financing assumption, the {\it portfolio value process} can be derived as
\begin{equation} \label{eq:problem-formulation-wealth-process-classic}
 \begin{split}
    d X^u_t & = \left[r(t, \alpha_t) X^u_t + [\mu(t, \alpha_t) - r(t, \alpha_t)]u_t \right] dt + \sigma(t, \alpha_t) u_t dW_t \\
    & = \left[r(t, \alpha_t) X^u_t + \rho(t, \alpha_t) \sigma(t, \alpha_t) u_t \right] dt + \sigma(t, \alpha_t) u_t dW_t
\end{split} 
\end{equation}
given the initial portfolio value $X_0 = x_0 > 0$ and the initial regime $\alpha_0 = i_0 \in \{1, \cdots, l\}$. Here, $\rho(t, \alpha_t) := \sigma^{-1}(t,\alpha_t)(\mu(t, \alpha_t) - r(t, \alpha_t))$ is the Sharpe ratio, and $\{X^u_t\}_{t\in [0,T]}$ with a superscript $u$ represents the portfolio value process that follows the {\it control policy} $u := \{u_t\}_{t\in [0,T]}$. We next define admissible control policies and state the classical Markowitz MV Problem.

\begin{definition} \label{def:admissible-control-space}
    We say that a control policy $u(t)$ is {\it admissible}, denoted as $u(t) \in \mathcal{A}$, if it satisfies the following conditions:  
    \begin{enumerate}
        \item[(i)] $u:[0,T] \mapsto \mathbb{R} \in L^2_{\mathcal{F}}(0,T)$, i.e., $u_\cdot$ is a $\mathbb{R}$-valued, $\{\mathcal{F}_t\}_{t\in [0,T]}$-adpated function such that $\mathbb{E} \left[\int_0^T |u_t|^2 dt \right] < +\infty$;
        \item[(ii)] the SDE (\ref{eq:problem-formulation-wealth-process-classic}) has a unique solution $X^u_\cdot$ corresponding to $u_\cdot$. 
    \end{enumerate}
\end{definition}

\begin{problem}[Classical Markowitz MV Problem] \label{pb:classical-MV-problem-origin}
\begin{equation}\label{eq:classical-MV-problem-origin}
    \min_{u \in \mathcal{A}} Var(X^u(T)) \text{ subject to } \mathbb{E}(X^u(T)) = z
\end{equation}
where $z > 0$ is a prespecified target for the expected terminal wealth, i.e., the expected wealth to be achieved at the end of the investment horizon.
\end{problem}

Noticing that the classical MV problem is a constrained optimization problem, 
we consider the Lagrangian dual of Problem \ref{pb:classical-MV-problem-origin}. 
Moreover, under the regime-switching market setting, we specify that the expectation in the original MV problem is actually conditioned on the initial states, including the investor's initial portfolio value $x_0 > 0$ and the initial market regime $i_0 \in \{1,\cdots,l\}$.

\subsection{The MVRS Problem}
Following the standard Dynamic Programming Principle (DPP) arguments, \citet{zhou2003markowitz} proposed and solved the Mean-Variance with Regime Switching (MVRS) problem (Problem \ref{pb:classical-MV-problem-lagrange}), which is the Lagrangian dual of Problem \ref{pb:classical-MV-problem-origin}. This modification reforms a time-inconsistent stochastic control problem into a time-consistent one, which makes it possible to obtain a precommitted solution. 
\begin{problem}[Mean-Variance with Regime Switching (MVRS) Problem] \label{pb:classical-MV-problem-lagrange}
\begin{equation}\label{eq:classical-MV-problem-lagrange}
    \min_{u \in \mathcal{A}} \Bigl\{ J(u, x_0, i_0; \lambda) := \mathbb{E}[(X^u_T + \lambda - z)^2 | X^u_0 = x_0, \alpha_0 = i_0] - \lambda^2 \Bigr\}
\end{equation}
where $\lambda > 0$ is the Lagrangian multiplier, $z > 0$ is a prespecified target for the expected terminal wealth and $J(u, x_0, i_0; \lambda)$ is the objective function. 
\end{problem}

Define the {\it value function} as the infimum of the objective function over all admissible controls $u\in\mathcal{A}$. For $0 \leq t < s \leq T, X_t = x, \alpha_t = i$, and $u \in \mathcal{A}$, the {\it value function} is given by
\begin{align}
    V(t,x,i) & = \inf_{u \in \mathcal{A}} \mathbb{E}[(X^u_T + \lambda -z)^2 | X_t = x, \alpha_t = i] - \lambda^2  \\
    & = \inf_{u \in \mathcal{A}} \mathbb{E}[V(s, X^u_s, \alpha_s) | X_t = x, \alpha_t = i].
\end{align}
Following DPP arguments, the value function can be deduced to satisfy the Hamilton-Jacobi-Bellman (HJB) Equation
\begin{equation} \label{eq:MVRS-HJB-origin}
\begin{split}
    & v_t(t,x,i) + \sum_{j=1}^2 q_{ij} v(t,x,j) + v_x(t,x,i) r(t,i)x \\
    & + \min_u \left\{v_x(t,x,i) \rho(t,i) \sigma(t,i) u(t,x,i) + \frac{1}{2} v_{xx}(t,x,i) \sigma^2(t,i) u^2(t,x,i) \right\} = 0
\end{split}
\end{equation}
Solving this HJB for the optimal control $u^*(\cdot)$ (i.e., investment policy) gives: 
\begin{equation} \label{eq:MVRS-optimal-control-0}
    u^*(t,x,i) = -\frac{\rho(t,i)v_x(t,x,i)}{\sigma(t,i)v_{xx}(t,x,i)}
\end{equation}
Substituting this optimal control $u^*(\cdot)$ back into Eq. \ref{eq:MVRS-HJB-origin}, \cite{zhou2003markowitz} solve for the corresponding optimal value function from the HJB equation and derive the explicit form of $u^*(\cdot)$, which we summarize in the following theorem.

\begin{theorem} \label{thm:MVRS-solution}
Problem \ref{pb:classical-MV-problem-lagrange} has an optimal control
\begin{equation} \label{eq:MVRS-optimal-control}
    u^*(t,x,i) = -\frac{\rho(t,i)}{\sigma(t,i)} (x + (\lambda - z) H(t,i))
\end{equation}
and the corresponding value function is given by
\begin{equation} \label{eq:MVRS-optimal-value-function}
\begin{split}
    & \inf_{u \in \mathcal{A}} J(u_\cdot; x_0, i_0, \lambda) \\
    & = \mathbb{E} [(X^{u^*}_T + \lambda - z)^2 | X^u_0 = x_0, \alpha_0 = i_0] \\
    & = V(0, x_0, i_0) \\
    & = P(0, i_0) [x_0 + (\lambda-z) H(0, i_0)]^2 \\ 
    & + (\lambda-z)^2 \mathbb{E} \left[\left. \int_0^T \sum_{j=1}^2 q_{\alpha(s) j} P(s,j) (H(s,j) - H(s, \alpha(s)))^2 ds \right| \alpha_0 = i_0 \right] 
    - \lambda^2,
\end{split}
\end{equation}
where $P(t,i)$ and $H(t,i)$ are the solutions to the following two Ordinary Differential Equations (ODE)
\begin{align}
& \begin{cases} \label{eq:MVRS-P(t,i)-ODE}
    \dot{P}(t,i) = (\rho^2(t,i) - 2r(t,i)) P(t,i) - \sum_{j=1}^l q_{ij}P(t,j) \\
    P(T,i) =1, \text{ for } i =1,2,\cdots,l
\end{cases} \\
& \begin{cases} \label{eq:MVRS-H(t,i)-ODE}
    \dot{H}(t,i) = r(t,i) H(t,i) - \frac{1}{P(t,i)} \sum_{j=1}^l q_{ij} P(t,j) (H(t,j) - H(t,i)) \\
    H(T,i) =1, \text{ for } i =1,2,\cdots,l
\end{cases}
\end{align}
\end{theorem}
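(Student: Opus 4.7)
The plan is to proceed by the classical ansatz--verification method for linear--quadratic regime-switching control. Motivated by the quadratic structure of the terminal cost $(x+\lambda-z)^2-\lambda^2$ and the linearity of the portfolio SDE in $x$, I would guess a candidate value function of the separable quadratic form
\begin{equation*}
v(t,x,i) \;=\; P(t,i)\bigl[x+(\lambda-z)H(t,i)\bigr]^2 \;+\; g(t,i) \;-\; \lambda^2,
\end{equation*}
with unknown time-regime-dependent coefficients $P(t,i)>0$, $H(t,i)$, and $g(t,i)$, subject to the terminal conditions $P(T,i)=1$, $H(T,i)=1$, $g(T,i)=0$ that match $v(T,x,i)=(x+\lambda-z)^2-\lambda^2$.

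Next, I would plug this ansatz into the HJB equation (\ref{eq:MVRS-HJB-origin}). Since $v_{xx}=2P(t,i)>0$, the pointwise minimization in $u$ is attained at the first-order condition (\ref{eq:MVRS-optimal-control-0}), yielding the closed-form candidate
\begin{equation*}
u^*(t,x,i) \;=\; -\frac{\rho(t,i)}{\sigma(t,i)}\bigl(x+(\lambda-z)H(t,i)\bigr),
\end{equation*}
and the minimized Hamiltonian reduces to $-\tfrac{1}{2}\rho^2 v_x^2/v_{xx} = -\rho^2(t,i)P(t,i)[x+(\lambda-z)H(t,i)]^2$. Substituting back and collecting powers of $x$ decouples the HJB into three equations. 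Matching the $x^2$-coefficients produces exactly the ODE (\ref{eq:MVRS-P(t,i)-ODE}) for $P(t,i)$. Matching the $x^1$-coefficients, and then eliminating $P_t$ using the $P$-ODE just obtained, yields the ODE (\ref{eq:MVRS-H(t,i)-ODE}) for $H(t,i)$. The $x^0$-coefficients leave a linear (inhomogeneous) first-order equation for $g(t,i)$ whose forcing term, after simplification via the $P$- and $H$-ODEs, reduces to $(\lambda-z)^2\sum_{j}q_{ij}P(t,j)(H(t,j)-H(t,i))^2$; by a Feynman--Kac representation against the regime Markov chain with zero terminal condition, this yields exactly the integral term appearing in (\ref{eq:MVRS-optimal-value-function}).

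The final, and arguably most delicate, step is the verification argument. I would first establish that the ODE systems (\ref{eq:MVRS-P(t,i)-ODE})--(\ref{eq:MVRS-H(t,i)-ODE}) admit unique $C^1$ solutions on $[0,T]$ with $P(t,i)>0$ (the latter used implicitly in the Hamiltonian reduction); local Lipschitz continuity of the right-hand sides and a comparison/Gronwall argument handle this. Then, for an arbitrary admissible $u\in\mathcal{A}$, I would apply It\^{o}'s formula for regime-switching diffusions to $v(t,X^u_t,\alpha_t)$ between $0$ and $T$, use the HJB inequality $v_t + \sum_j q_{ij}v(\cdot,j) + v_x r x + v_x\rho\sigma u + \tfrac{1}{2}v_{xx}\sigma^2 u^2 \geq 0$ pointwise, and take expectations. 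The stochastic integrals arising from $W$ and from the compensated Poisson random measure of $\alpha$ are genuine martingales because the admissibility condition $\mathbb{E}\!\int_0^T\!|u_t|^2\,dt<\infty$ together with the linear-growth form of $u^*$ gives the required $L^2$ bounds on $X^u$. This yields $\mathbb{E}[v(T,X^u_T,\alpha_T)\mid X_0=x_0,\alpha_0=i_0]\geq v(0,x_0,i_0)$, i.e., $J(u;x_0,i_0,\lambda)\geq V(0,x_0,i_0)$, with equality along $u^*$. Verifying that $u^*$ itself is admissible---that the linear SDE obtained by substituting $u^*$ into (\ref{eq:problem-formulation-wealth-process-classic}) has a unique strong solution with $\mathbb{E}\!\int_0^T|u^*_t|^2 dt<\infty$---is routine since $\rho/\sigma$ and $H$ are bounded on $[0,T]\times\{1,\dots,l\}$, closing the proof.

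The main obstacle I expect is not the algebraic derivation of the ODEs but rather the verification step: carefully handling the regime-switching It\^{o} formula (with the compensated jump martingale of $\alpha$) and proving the required uniform integrability/martingale property of the stochastic integrals, so that the pointwise HJB inequality transfers cleanly to an expectation inequality for every $u\in\mathcal{A}$. Once that is in place, equality along the candidate $u^*$ identifies both the optimum and the value function simultaneously.
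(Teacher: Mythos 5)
Your proposal is correct and follows essentially the same route the paper relies on: the paper does not reprove Theorem \ref{thm:MVRS-solution} (it is quoted from \citet{zhou2003markowitz}), but the appendix proof of the analogous Theorem \ref{thm:EMVRS-solution} proceeds exactly as you describe --- a quadratic ansatz $P(t,i)[x+(\lambda-z)H(t,i)]^2 + (\cdots)$, pointwise minimization of the Hamiltonian, matching powers of $[x+(\lambda-z)H(t,i)]$ to obtain the ODEs (\ref{eq:MVRS-P(t,i)-ODE})--(\ref{eq:MVRS-H(t,i)-ODE}), and a Feynman--Kac representation against the regime chain (Lemma \ref{lm:solving-general-ODE-for-C(t,i)-D(t,i)}) for the zeroth-order term, which yields the integral in (\ref{eq:MVRS-optimal-value-function}). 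Your added attention to the verification step (the regime-switching It\^o formula and the martingale property of the stochastic integrals) is a point the paper leaves implicit, and is handled correctly in your sketch.
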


While \cite{zhou2003markowitz} considered the market regime dynamics that affect investors' trading activity, the optimal control $u^*(t,x,i)$ (Eq. \ref{eq:MVRS-optimal-control}) is an $\mathbb{R}$-valued, deterministic function given the time $t$, portfolio value $x$ and regime $i$. This indicates a precommitted policy, which is fixed and does not explore within the control space once it is considered to be optimized. However, recent work has demonstrated that a RL-facilitated exploration within the control space can achieve better performance, although under a non-Regime-Switching setting (\citet{wang2020continuous}). This motivates us to consider a RL-extension to Problem \ref{pb:classical-MV-problem-lagrange}.

 
\section{Exploratory Mean-Variance with Regime Switching Problem} \label{sec:methodology}
In this section, we make an RL extension to the aforementioned MVRS problem (Problem \ref{pb:classical-MV-problem-lagrange}) in \citet{zhou2003markowitz}, following the exploratory policy formulation from \citet{wang2020continuous}. We hereafter refer our problem to the {\it Exploratory Mean-Variance with Regime-Switching (EMVRS) problem}. This is a generalized problem of the past works \citep{zhou2003markowitz, wang2020continuous}, which accounts for the dynamics of the market regimes and allows for informed exploration within the control space. 

\subsection{Extension to the MVRS Problem} \label{sec:EMVRS-problem-formulation}

We adopt from Section \ref{sec:problem-formulation} the stock and bond dynamics (Eq. \ref{eq:problem-formulation-stock-dynamics} and \ref{eq:problem-formulation-bond-dynamics}) and the Markov-modulated regime-switching setting. Further, in order to allow for exploration within the control space, we 
define the exploratory control as a probability distribution, called the {\it policy distribution}. For any subset of real numbers, $\mathcal{V} \subseteq \mathbb{R}$, we denote $\mathcal{P}(\mathcal{V})$ as the set of probability density functions defined over $\mathcal{V}$. 


\begin{definition} \label{def:policy-distribution}
    The {\it policy distribution}\footnote{\cite{wang2020continuous} defined this in a similar way and called it the {\it exploratory control}.}, $\boldsymbol{\pi} :=  \{\pi_t(\cdot|\alpha_t)\}_{t\in[0,T]}$, is a distribution-valued, $\{\mathcal{F}_t\}_{t\in[0,T]}$-adapted random variable, where each $\pi_t(\cdot|\alpha_t) : \mathcal{U} \mapsto \mathcal{P}(\mathcal{U})$ is a probability density function over the feasible set of control values $\mathcal{U} \subseteq \mathbb{R}$ conditioned on the regime $\alpha_t \in \{1, \cdots, l\}$, satisfying $\int_{\mathcal{U}} \pi_t(u|\alpha_t) du = 1$ and $\pi_t(u|\alpha_t) \geq 0, \forall t \in [0,T], u \in \mathcal{U}$. To avoid ambiguity, we call $\pi_t$ the policy distribution at time $t\in[0,T]$. 
\end{definition}
\begin{remark}
    The feasibility of control values depends on market regulations and institutional constraints. For example, if short selling is forbidden, then $u \geq 0$ which implies $\mathcal{U} \equiv \mathbb{R}_+$.
\end{remark}

Then, the portfolio value process is extended from Eq. \ref{eq:problem-formulation-wealth-process-classic} to
\begin{align} 
    \begin{split} \label{eq:EMVRS-wealth-process-exploratory}
    d X^\pi_t & = \left[r(t, \alpha_t) X^\pi_t + \int_{\mathcal{A}} \rho(t, \alpha_t) \sigma(t, \alpha_t) \cdot u \cdot \pi_t(u|\alpha_t) du \right] dt + \left(\sqrt{\int_\mathcal{A} \sigma^2(t, \alpha_t) u^2 \cdot \pi_t(u|\alpha_t)  du} \right) dW_t
    \end{split}
\end{align}
We include a brief derivation of this equation in the Appendix.

\begin{definition}
   For any time $t\in[0,T]$, we say that $\pi_t$ is an {\it admissible policy distribution}, denoted as $\pi_t \in \mathcal{A}^\pi$, if the following conditions are satisfied: 
    \begin{enumerate}
        \item[(i)] For $\alpha_t \in \{1,\cdots, l\}$, $\pi_t(\cdot|\alpha_t)$ is a policy distribution as described in Definition \ref{def:policy-distribution}. 
        \item[(ii)] For any feasible set of control values $\mathcal{U} \subseteq \mathbb{R}$,
        the stochastic process $\left\{\int_{\mathcal{U}} \pi_t(u|\alpha_t) du\right\}_{t \in [0,T]}$ is $\{\mathcal{F}_t\}_{t\in[0,T]}$-adapted. 
        \item[(iii)]
        For all $(t, i)\in [0,T) \times \{1,\cdots, l\}$,
        \begin{equation}
            \mathbb{E} \left[\left. \int_t^T \int_\mathcal{A} u^2 \pi_s(u|\alpha_s) du ds \right| \alpha_t = i\right] < \infty
        \end{equation}
        This implies that the exploratory wealth dynamic SDE (Eq. \ref{eq:EMVRS-wealth-process-exploratory}) has a unique solution $X^\pi_t$ corresponding to $\pi_t(\cdot|\alpha_t)$, according to the arguments in \citet{zhou2003markowitz, dai2023learning}. 
        \item[(iv)] For all $(t,x, i)\in [0,T) \times \mathbb{R} \times \{1,\cdots,l\}$ and constant $\xi < \infty$, 
        \begin{equation}
            \mathbb{E} \left[\left. (X^\pi_T +\lambda - z)^2 + \xi \int_t^T \int_\mathcal{A} \pi_s(u|\alpha_s) \log \pi_s(u|\alpha_s) du ds \right| X^\pi_t = x, \alpha_t = i \right] < \infty
        \end{equation} 
    \end{enumerate}
    If $\pi_t\in \mathcal{A}^\pi, \forall t\in[0,T]$, we denote $\boldsymbol{\pi} \in \mathcal{A}^\pi$.
\end{definition}

We propose the EMVRS problem as an entropy-regularized version of the MVRS problem (\ref{pb:classical-MV-problem-lagrange}), which restrains the policy distribution within a certain family of distributions. 
\begin{problem}[Exploratory Mean-Variance with Regime-Switching (EMVRS) Problem] \label{pb:EMVRS-problem}
\begin{equation}
    \min_{\pi \in \mathcal{A}^\pi} \mathbb{E}\left[ \left. (X^\pi_T + \lambda - z)^2 + \xi \int_0^T \int_\mathcal{A} \pi_t(u|\alpha_t) \log \pi_t(u|\alpha_t) du dt \right| X^\pi_0 = x_0, \alpha_0 = i_0 \right] - \lambda^2 
    \end{equation}
where $\xi > 0$ is the exploration weight, $x_0 > 0$ and $i_0 \in \{1, \cdots, l\}$ are respectively the portfolio value and market regime at $t=0$. 
\end{problem}
For any $(t, x) \in [0,T] \times \mathbb{R}$, $i \in \{1,\cdots, l\}$ and admissible policy distribution $\pi \in \mathcal{A}^\pi$, we define the {\it value function} (corresponding to $\pi$) $V^\pi$ and the {\it optimal value function} $V^*$ as
\begin{align} 
    V^\pi (t,x,i) & := \mathbb{E}\left[\left. (X^\pi_T + \lambda - z)^2 + \xi \int_t^T \int_\mathcal{A} \pi_k(u|\alpha_k) \log \pi_k(u|\alpha_k) du dk \right| X^\pi_t = x, \alpha_t = i\right] - \lambda^2  \label{eq:EMVRS-value-function} \\
    V^*(t,x,i) & := \inf_{\pi \in \mathcal{A}^\pi} V^\pi(t,x,i) \label{eq:EMVRS-DPP-recursive-form-of-Value-function}
\end{align}
By Bellman's Principle of Optimality, we can derive the recursive form of the optimal value function, for $0 \leq t \leq s \leq T, x\in\mathbb{R}, i \in \{1, \cdots, l\}:$
\begin{equation} \label{eq:EMVRS-DPP-recursive-form-of-Value-function}
\begin{split}
    V^*&(t,x,i) = \inf_{\pi \in \mathcal{A}^\pi}  \mathbb{E}\left[\left. V^\pi(s, X^{\pi}_s, \alpha_s) + \xi \int_{t}^s \int_{\mathcal{A}} \pi_k(u|\alpha_k) \log \pi_k(u|\alpha_k) du dk \right| X^\pi_t = x, \alpha_t = i \right].
\end{split}
\end{equation}
Following DPP arguments, we know by assuming $V$ is smooth and applying It\^o's formula to Eq. \ref{eq:EMVRS-DPP-recursive-form-of-Value-function} that the optimal value function $V^*$ satisfies the HJB equation
\begin{equation} \label{eq:EMVRS-HJB-raw}
\begin{split}
    v_t & (t,x, i) + v_x(t,x,i)r(t,i)x + \sum_{j=1}^l q_{i j} v(t,x,j) \\
    & + \min_{\pi(\cdot|i) \in \mathcal{A}^\pi} \left\{ \int_\mathcal{A} \left[\frac{1}{2} v_{xx}(t,x,i) \sigma^2(t, i ) u^2 + v_{x}(t,x,i) \rho(t,i)\sigma(t,i)u + \xi \log \pi(u|i) \right] \pi(u|i) du \right\} = 0.
\end{split} 
\end{equation}
Solving the minimization part in Eq. \ref{eq:EMVRS-HJB-raw} yields the optimal policy distribution $\pi^*$, which is a Gaussian distribution with the mean coinciding with the optimal control of the MVRS problem (Eq. \ref{eq:MVRS-optimal-control-0}). 
\begin{equation} \label{eq:EMVRS-optimal-policy-distribution-raw}
\begin{split}
    \pi^*_t(u; i) & = \frac{\exp\left(-\frac{1}{\xi}[\frac{1}{2} v_{xx}(t,x,i) \sigma^2(t,i)u^2 + v_x(t,x,i)\rho(t,i)\sigma(t,i)u]\right)}{\int_\mathcal{A} \exp\left(-\frac{1}{\xi}[\frac{1}{2} v_{xx}(t,x,i) \sigma^2(t,i)u^2 + v_x(t,x,i)\rho(t,i)\sigma(t,i)u]\right)du} \\
    & = N \left(u \left| -\frac{\rho(t,i)v_x(t,x,i)}{\sigma(t,i) v_{xx}(t,x,i)}, \frac{\xi}{\sigma^2(t,i) v_{xx}(t,x,i)} \right. \right).
\end{split}
\end{equation}
Substituting $\pi^*$ into Eq. \ref{eq:EMVRS-HJB-raw} reduces the HJB equation to 
\begin{equation} \label{eq:EMVRS-HJB}
\begin{split}
    v_t (t,x, i) + v_x(t,x,i)r(t,i)x + \sum_{j=1}^l q_{i j} v(t,x,j) - \frac{1}{2} \frac{\rho^2(t,i) v_x^2(t,x,i)}{v_{xx}(t,x,i)} - \frac{\xi}{2} \log\left(\frac{2\pi \xi}{ \sigma^2(t,i) v_{xx}(t,x,i)} \right) = 0.
\end{split}
\end{equation} 
The rest of the task is to solve the ``reduced" HJB equation (Eq. \ref{eq:EMVRS-HJB}) for the optimal value function $V^*$, which is given in the theorem below. The proof is given in the Appendix.

\begin{theorem} \label{thm:EMVRS-solution}
Problem \ref{pb:EMVRS-problem} has an optimal policy distribution $\boldsymbol{\pi^*} = \{\pi_t^*\}_{t\in[0,T]}$, where each $\pi_t^*$ is given by a Gaussian distribution 
\begin{equation} \label{eq:EMVRS-optimal-policy-distribution}
    \pi_t^*(u; i) = N \left(-\frac{\rho(t,i)}{\sigma(t,i)} [x + (\lambda-z) H(t,i)], \frac{\xi}{2 \sigma^2(t,i) P(t,i)} \right),
\end{equation}
with $(t,x) \in [0,T] \times \mathbb{R}$ and $i \in \{1, \cdots, l\}$. The corresponding optimal value function is given by
\begin{equation} \label{eq:EMVRS-optimal-value-function}
\begin{split}
     V^*(t,x,i) = P(t,i)[x+(\lambda-z)H(t,i)]^2 + (\lambda-z)^2 C(t,i) + D(t,i) - \lambda^2,
\end{split}
\end{equation}
where $P(t,i), H(t,i)$ are solutions of the two ODEs in Eq. \ref{eq:MVRS-P(t,i)-ODE} and \ref{eq:MVRS-H(t,i)-ODE}, and
\begin{align}
    & C(t,i) = \sum_{m=1}^l \sum_{j=1}^l  \int_t^T p_{im}(s-t) q_{mj} P(s,j) (H(s,j) - H(s, m))^2 ds, \\
    & D(t,i) = - \sum_{m=1}^l \int_t^T p_{im}(s-t) \frac{\xi}{2} \log \left(\frac{\pi e \xi}{\sigma^2(s, m) P(s, m)}\right) ds.
\end{align}
Moreover, at initialization with $t=0$, the optimal Lagrange multiplier is 
\begin{equation} \label{eq:EMVRS-optimal-lambda}
    \lambda^* = \frac{z - P(0, i_0) H(0,i_0) x_0}{P(0, i_0) H(0, i_0)^2 + C(0, i_0) - 1} + z.
\end{equation}
\end{theorem}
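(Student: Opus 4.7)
The plan is to solve the reduced HJB equation (\ref{eq:EMVRS-HJB}) by a separation-of-variables ansatz and then close the argument with a standard verification theorem. Motivated by the MVRS solution in Theorem \ref{thm:MVRS-solution}, I would guess
\begin{equation*}
    v(t,x,i) = P(t,i)\bigl[x+(\lambda-z)H(t,i)\bigr]^2 + (\lambda-z)^2 C(t,i) + D(t,i) - \lambda^2,
\end{equation*}
where $P,H$ are the same functions as in the MVRS problem (so $P(T,i)=H(T,i)=1$), and the terminal condition $v(T,x,i)=(x+\lambda-z)^2-\lambda^2$ forces $C(T,i)=D(T,i)=0$. This form is quadratic in $x$ with Markov-modulated coefficients, plus an additive correction $D$ intended to absorb the entropy term.

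Next, I would compute $v_x = 2P(t,i)[x+(\lambda-z)H(t,i)]$, $v_{xx}=2P(t,i)$, and $v_t$ via the product rule, then substitute into (\ref{eq:EMVRS-HJB}). The term $-\tfrac{1}{2}\rho^2 v_x^2/v_{xx}$ collapses to $-\rho^2 P[x+(\lambda-z)H]^2$, and the log term becomes a function of $t$ and $i$ only. The key algebraic move is to re-center the Markov generator around state $i$ by writing
\begin{equation*}
    \sum_j q_{ij} P(t,j)\bigl[x+(\lambda-z)H(t,j)\bigr]^2 = \sum_j q_{ij} P(t,j)\bigl[(x+(\lambda-z)H(t,i)) + (\lambda-z)(H(t,j)-H(t,i))\bigr]^2,
\end{equation*}
which naturally sorts cross terms by powers of $[x+(\lambda-z)H(t,i)]$. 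Matching the coefficient of $[x+(\lambda-z)H(t,i)]^2$ reproduces the Riccati ODE (\ref{eq:MVRS-P(t,i)-ODE}) for $P$; matching $[x+(\lambda-z)H(t,i)](\lambda-z)$ (after invoking the $P$-ODE to cancel a residual $\sum_j q_{ij}P(t,j)$ factor) reproduces the ODE (\ref{eq:MVRS-H(t,i)-ODE}) for $H$; matching $(\lambda-z)^2$ gives the linear backward ODE $\dot C(t,i) + \sum_j q_{ij} C(t,j) + \sum_j q_{ij} P(t,j)(H(t,j)-H(t,i))^2 = 0$ with $C(T,i)=0$; and the remaining constant piece gives the analogous linear backward ODE for $D$ whose source is the simplification of the log term, with $D(T,i)=0$.

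The Feynman-Kac representation for the pure-jump Markov chain with generator $Q$---namely, that any solution of $\dot f(t,i) + \sum_j q_{ij} f(t,j) + g(t,i) = 0$ with $f(T,i)=0$ equals $f(t,i)=\sum_m \int_t^T p_{im}(s-t) g(s,m)\,ds$---then produces the stated integral formulas for $C$ and $D$. For the policy distribution, substituting $v_x=2P[x+(\lambda-z)H]$ and $v_{xx}=2P$ into (\ref{eq:EMVRS-optimal-policy-distribution-raw}) immediately yields the claimed Gaussian mean $-(\rho/\sigma)[x+(\lambda-z)H]$ and variance $\xi/[2\sigma^2 P]$. For $\lambda^*$, the EMVRS problem is the Lagrangian dual of Problem \ref{pb:classical-MV-problem-origin}, so $\lambda^*$ maximizes $V^*(0,x_0,i_0;\lambda)$ in $\lambda$; setting $\partial/\partial\lambda=0$ gives a linear equation in $\lambda$ that rearranges directly to (\ref{eq:EMVRS-optimal-lambda}). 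A standard verification step---checking smoothness of $v$, admissibility of $\pi^*$ (conditions (iii)-(iv) in the definition of $\mathcal{A}^\pi$), and the terminal boundary---then confirms that the candidate equals $V^*$.

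I expect the main obstacle to be the bookkeeping around the $H$-ODE: the Markov generator $\sum_j q_{ij}$ applied to $P(t,j)[x+(\lambda-z)H(t,j)]^2$ does not split cleanly by powers of $x$ because $H(t,j)$ depends on $j$, so the three groups of terms produced by the completion-of-squares above must be apportioned carefully between the $P$-, $H$- and $C$-equations. The $H$-ODE in particular only emerges after using the $P$-ODE to cancel the inhomogeneous piece that appears at the $x$-linear order. Once this identity is unwound, the rest is routine: Feynman-Kac yields $C$ and $D$, a one-variable optimization yields $\lambda^*$, and the verification theorem is standard.
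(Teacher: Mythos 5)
Your proposal follows essentially the same route as the paper's proof: the same quadratic ansatz $P(t,i)[x+(\lambda-z)H(t,i)]^2+(\lambda-z)^2C(t,i)+D(t,i)-\lambda^2$, the same substitution into the reduced HJB with re-centering of the generator term around state $i$ to isolate the $P$-, $H$-, $C$- and $D$-equations, the same Feynman--Kac representation for the Markov chain (which the paper isolates as a separate lemma and proves) to obtain the integral formulas for $C$ and $D$, and the same first-order condition in $\lambda$ for the quadratic in $(\lambda-z)$ to obtain $\lambda^*$. The argument is correct and matches the paper's; the only cosmetic difference is that you derive the ODEs by coefficient matching whereas the paper verifies the ansatz assuming the ODEs, and you frame $\lambda^*$ as a maximizer of the dual while the paper calls it the minimizer---both reduce to the same stationarity condition and the same formula.
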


\begin{remark}
The functions $C(t,i)$ and $D(t,i)$ are solutions to the following two ODEs 
\begin{align}
    & \begin{cases} \label{eq:EMVRS-C(t,i)}
    \dot{C}(t,i) = - \sum_{j=1}^l q_{ij} \left[P(t,j) (H(t,j) - H(t,i))^2 + C(t,j)\right], \\
    C(T,i) = 0, \text{ for } i \in \{1,\cdots, l\},
\end{cases} \\
&\begin{cases} \label{eq:EMVRS-D(t,i)}
    \dot{D}(t,i) = \frac{\xi}{2} \log \left(\frac{\pi \xi}{\sigma^2(t,i) P(t,i)}\right) - \sum_{j=1}^l q_{ij} D(t,j), \\
    D(t,i)= 0, \text{ for } i \in \{1,\cdots, l\}.
\end{cases}
\end{align}
\end{remark}

\begin{remark}
    When there is only one regime, the dynamics of the stock and bond can be reduced from Eq. \ref{eq:problem-formulation-stock-dynamics} and \ref{eq:problem-formulation-bond-dynamics} to: 
    \begin{align}
        & dS_t = S_t \left\{\mu dt + \sigma dW_t \right\}, \text{ with } S_0 > 0 \label{eq:EMV-stock-dynamic} \\
        & dB_t = r B_t dt, \text{ with } B_0 > 0 \label{eq:EMV-bond-dynamic}
    \end{align}
    where $\mu \in \mathbb{R}$ and $\sigma > 0$ are respectively the mean and volatility of the stock returns and $r > 0$ is the risk-free interest rate. Then, Problem \ref{pb:EMVRS-problem} can be simplified to the Exploratory Mean Variance (EMV) problem
    \begin{equation}\label{pb:EMV-problem}
        \min_{\pi \in \mathbb{P}(\mathcal{A})} \mathbb{E}\left[ \left. (X^\pi_T + \lambda - z)^2 + \xi \int_0^T \int_\mathcal{A} \pi_t(u) \log \pi_t(u) du dt \right| X^\pi_0 = x_0 \right] - \lambda^2, 
    \end{equation}
    where $\xi > 0$ is the exploration weight, and $\{X_t\}_{t\in[0,T]}$ is the solution to the SDE 
    \begin{equation}
        dX^\pi_t = \left(rX^\pi_t + \int_\mathcal{A} \rho \sigma u \pi_t(u) du \right) dt + \left(\sqrt{\int_\mathcal{A} \sigma^2 u^2 \pi_t(u) du} \right) dW_t. 
    \end{equation}
    The EMV problem has been addressed in \cite{wang2020continuous}. We summarize their solution in the following corollary. 
    \begin{corollary} \label{thm:EMV-solution}
        The EMV problem has an optimal policy distribution $\boldsymbol{\pi^*} = \{\pi^*_t(\cdot)\}_{t\in [0,T]}$, where each $\pi^*_t:\mathcal{A} \mapsto \mathbb{P}(\mathcal{A})$ is given by the Gaussian distribution
        \begin{equation} \label{eq:EMV-optimal-policy}
            \pi^*_t(u) = 
            N \left(u \left| -\frac{\rho}{\sigma}(x + (\lambda - z)e^{-r(T-t)}),
            \frac{\xi}{2 \sigma^2} e^{(\rho^2-2r)(T-t)} \right.\right), 
            \text{ with } \lambda^* = z-\frac{z e^{(\rho^2-r)(T-t)} - x_0}{e^{\rho^2 T} - 1}
        \end{equation}
        The corresponding value function is given by
        \begin{equation}
            v(t,x, \lambda) = \left(x+(\lambda-z)e^{-r(T-t)}\right)^2 e^{-(\rho^2-2r)(T-t)} 
            + \frac{\xi (\rho^2-2r)}{4}(T^2-t^2) - \frac{\xi}{2} \left[(\rho^2-2r) T - \log \frac{\sigma^2}{\pi\xi}\right] (T-t) - \lambda^2
        \end{equation}
    \end{corollary}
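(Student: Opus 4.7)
The plan is to obtain the corollary as a direct specialization of Theorem \ref{thm:EMVRS-solution} with $l=1$. When there is only a single regime, the generator matrix degenerates to $Q=(q_{11})=(0)$ (since rows of a Markov chain generator must sum to zero), the transition probability satisfies $p_{11}(t)=1$ for all $t$, and all regime-indexed market parameters collapse to the constants $\mu,\sigma,r,\rho$. My first step would be to substitute $l=1$ into each of the auxiliary ODEs appearing in Theorem \ref{thm:EMVRS-solution} and its remark, and then identify the closed-form scalar solutions.

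Concretely, the ODE in Eq.~\eqref{eq:MVRS-P(t,i)-ODE} reduces to $\dot{P}(t)=(\rho^2-2r)P(t)$ with $P(T)=1$, giving $P(t)=e^{(2r-\rho^2)(T-t)}$; the ODE in Eq.~\eqref{eq:MVRS-H(t,i)-ODE} reduces to $\dot{H}(t)=rH(t)$ with $H(T)=1$, giving $H(t)=e^{-r(T-t)}$. Next, the function $C(t,i)$ in Eq.~\eqref{eq:EMVRS-C(t,i)} is identically zero because every term in the defining sum carries the factor $q_{mj}$ or the difference $H(s,j)-H(s,m)$, both of which vanish when $l=1$. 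For $D(t)$, using $p_{11}(s-t)=1$ and the explicit $P(s)$, the integrand in the remark's definition becomes
\begin{equation*}
\tfrac{\xi}{2}\log\Bigl(\tfrac{\pi e\xi}{\sigma^{2}P(s)}\Bigr)=\tfrac{\xi}{2}\Bigl[\log\tfrac{\pi e\xi}{\sigma^{2}}+(\rho^{2}-2r)(T-s)\Bigr],
\end{equation*}
so $D(t)$ is computed by a single one-dimensional integral in $s\in[t,T]$. Plugging $P,H,C,D$ into Eq.~\eqref{eq:EMVRS-optimal-value-function} produces the claimed value function after grouping the constant-in-$s$ and linear-in-$s$ pieces; the $(T^{2}-t^{2})$ versus $(T-t)^{2}$ forms in the stated corollary can be reconciled by the identity $(T-t)^{2}=(T^{2}-t^{2})-2T(T-t)+2t(T-t)$, which is what absorbs the extra $(\rho^{2}-2r)T(T-t)$ term inside the bracket.

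For the optimal policy distribution, I would directly substitute $P(t)$ and $H(t)$ into Eq.~\eqref{eq:EMVRS-optimal-policy-distribution}: the mean becomes $-\tfrac{\rho}{\sigma}\bigl(x+(\lambda-z)e^{-r(T-t)}\bigr)$, and the variance $\tfrac{\xi}{2\sigma^{2}P(t)}=\tfrac{\xi}{2\sigma^{2}}e^{(\rho^{2}-2r)(T-t)}$, exactly matching Eq.~\eqref{eq:EMV-optimal-policy}. The optimal Lagrange multiplier follows from Eq.~\eqref{eq:EMVRS-optimal-lambda} with $C(0,i_{0})=0$, $P(0)H(0)=e^{(r-\rho^{2})T}$, $P(0)H(0)^{2}=e^{-\rho^{2}T}$, after which a routine multiplication of numerator and denominator by $e^{\rho^{2}T}$ yields the stated form.

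The only genuinely non-routine step is the algebraic reorganization of $D(t)$ into the particular presentation used in the corollary; everything else is mechanical substitution, so the main obstacle is careful bookkeeping of the exponent signs (e.g.\ $(2r-\rho^{2})$ versus $-(\rho^{2}-2r)$) and the use of the identity above to rewrite $(T-t)^{2}$ in the mixed $(T^{2}-t^{2})$-plus-$(T-t)$ form. I would therefore devote most of the write-up to that algebraic reduction, and treat the policy and Lagrange-multiplier specializations as short corollaries of the substitutions.
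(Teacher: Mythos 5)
Your proposal follows essentially the same route as the paper's own proof: specialize Theorem \ref{thm:EMVRS-solution} to $l=1$ so that $q_{11}=0$, solve the four decoupled ODEs to get $P(t)=e^{-(\rho^2-2r)(T-t)}$, $H(t)=e^{-r(T-t)}$, $C\equiv 0$ and $D$ by direct integration, then substitute into Eq.~\eqref{eq:EMVRS-optimal-value-function}, \eqref{eq:EMVRS-optimal-policy-distribution} and \eqref{eq:EMVRS-optimal-lambda}. Two small corrections to the one step you flagged as non-routine: the identity you quote is false as written (e.g.\ at $t=0$ it gives $(T-t)(3t-T)\neq(T-t)^2$); the reconciliation you actually need is $(T-t)^2 = 2T(T-t)-(T^2-t^2)$, which turns $-\tfrac{\xi(\rho^2-2r)}{4}(T-t)^2$ into the stated $\tfrac{\xi(\rho^2-2r)}{4}(T^2-t^2)-\tfrac{\xi(\rho^2-2r)}{2}T(T-t)$. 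Also, the corollary's $D(t)$ is consistent with the ODE \eqref{eq:EMVRS-D(t,i)} whose integrand is $\tfrac{\xi}{2}\log\bigl(\tfrac{\pi\xi}{\sigma^2 P}\bigr)$ (no factor of $e$); if you integrate the $\pi e\xi$ version appearing in the integral formula of Theorem \ref{thm:EMVRS-solution} you pick up a spurious extra $-\tfrac{\xi}{2}(T-t)$ --- that discrepancy is an internal inconsistency of the paper, but your write-up should use the ODE version to land on the stated value function.
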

\end{remark}

\subsection{Policy Improvement Theorem} \label{sec:policy-improvement-theorem}
Theorem \ref{thm:EMVRS-solution} provided the optimal policy distribution that solves the EMVRS problem. However, in practice we usually start with an initial guess of the policy $\pi^0$ and iteratively update it, until it converges (or is sufficiently close) to the optimal solution $\pi^*$. In the RL literature, this is known as {\it policy iteration} (\cite{RL-Sutton1998}). In the following Policy Improvement Theorem (PIT), we propose an iteration scheme for policy updates, which is guaranteed to improve or at least not downgrade the current policy. The proof is given in the Appendix. 

\begin{theorem} \label{thm:EMVRS-PIT}
Let $\boldsymbol{\pi} \in \mathcal{A}^\pi$ be any admissible policy distribution and $V^\pi(\cdot, \cdot, i)$ be the corresponding value function as defined in Eq. \ref{eq:EMVRS-value-function}, satisfying $V_{xx}^\pi(t,x,i) > 0$ for any regime $i \in \{1, \cdots, l\}$, time and wealth $(t,x) \in [0,T] \times \mathbb{R}$.
Consider constructing a new policy distribution $\boldsymbol{\pi}^* = \{\pi^*_t\}$, where each $\pi^*_t$ is given by
\begin{equation} \label{eq:PIT-construction-of-pi*}
    \pi^*_t (u; i) = N\left(-\frac{\rho(t,i) V_x^\pi(t,x,i)}{\sigma(t,i) V_{xx}^\pi(t,x,i)}, \frac{\xi}{\sigma^2(t,i) V_{xx}^\pi(t,x,i)} \right).
\end{equation}
If this new policy is admissible, i.e., $\boldsymbol{\pi^*} \in \mathcal{A}^\pi$, then $V^{\pi^*}(t,x,i) \leq V^{\pi}(t,x,i)$,  for all $(t,x) \in [0,T] \times \mathbb{R}, i \in \{1,\cdots,l\}$.
\end{theorem}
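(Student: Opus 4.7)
The plan is to evaluate $V^\pi$ along the sample paths of the exploratory wealth dynamics driven by $\boldsymbol{\pi}^*$, and then exploit the fact that $\pi^*$ was explicitly constructed to be the unique pointwise minimizer of the integrand appearing in the policy-evaluation PDE for $V^\pi$. This is the standard ``performance difference'' route to policy improvement, adapted to the regime-switching, entropy-regularized setting.

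I would first observe that, for any fixed admissible $\boldsymbol{\pi} \in \mathcal{A}^\pi$, the associated value function satisfies the Feynman--Kac PDE obtained by simply omitting the $\min$ in Eq. \ref{eq:EMVRS-HJB-raw},
\begin{equation*}
V^\pi_t + V^\pi_x r(t,i) x + \sum_{j=1}^{l} q_{ij} V^\pi(t,x,j) + \int_{\mathcal{U}} \left[ \tfrac{1}{2} V^\pi_{xx} \sigma^2(t,i) u^2 + V^\pi_x \rho(t,i) \sigma(t,i) u + \xi \log \pi_t(u|i) \right] \pi_t(u|i)\, du = 0,
\end{equation*}
with terminal condition $V^\pi(T,x,i) = (x+\lambda-z)^2 - \lambda^2$. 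Next I would apply the regime-switching It\^o formula to $V^\pi(s, X_s^{\pi^*}, \alpha_s)$ along the dynamics Eq. \ref{eq:EMVRS-wealth-process-exploratory} driven by $\boldsymbol{\pi}^*$, condition on $(X_t^{\pi^*}, \alpha_t) = (x,i)$, and invoke admissibility of $\boldsymbol{\pi}^*$ to discard the Brownian and compensated chain-jump martingale terms. Substituting the PDE above into the resulting drift, and using the terminal identity together with the definition of $V^{\pi^*}$, yields the performance-difference identity
\begin{equation*}
V^{\pi^*}(t,x,i) - V^\pi(t,x,i) = \mathbb{E}\left[ \int_t^T \left\{ \int_{\mathcal{U}} \left[h_s^\pi(u) + \xi \log \pi^*_s(u|\alpha_s)\right] \pi^*_s(u|\alpha_s)\, du - \int_{\mathcal{U}} \left[h_s^\pi(u) + \xi \log \pi_s(u|\alpha_s)\right] \pi_s(u|\alpha_s)\, du \right\} ds \right],
\end{equation*}
where $h_s^\pi(u) := \tfrac{1}{2} V^\pi_{xx}(s,X_s^{\pi^*},\alpha_s) \sigma^2(s,\alpha_s) u^2 + V^\pi_x(s,X_s^{\pi^*},\alpha_s) \rho(s,\alpha_s) \sigma(s,\alpha_s) u$.

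The final analytical step is to show that the bracketed quantity is nonpositive for every $s$ on every path. Since $V^\pi_{xx} > 0$, the map $u \mapsto h_s^\pi(u)$ is a strongly convex quadratic, and a Lagrange-multiplier argument with the constraint $\int q\, du = 1$ shows that $q \mapsto \int [h_s^\pi(u) + \xi \log q(u)]\, q(u)\, du$ is uniquely minimized over probability densities by the Gibbs density $q^\star(u) \propto \exp(-h_s^\pi(u)/\xi)$. Completing the square in $h_s^\pi$ identifies $q^\star$ with exactly the Gaussian in Eq. \ref{eq:PIT-construction-of-pi*}. Consequently $\pi^*_s$ attains the pointwise minimum, the bracketed integrand is $\le 0$ on every sample path, and $V^{\pi^*}(t,x,i) \le V^\pi(t,x,i)$ follows after taking expectation.

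The main obstacles I anticipate are (i) rigorously handling the compensated martingale associated with jumps of the regime chain $\alpha_t$ so that the regime-switching It\^o expansion collapses cleanly to the drift expression above, and justifying that all martingale terms have zero conditional expectation under the square-integrability built into the admissibility of $\boldsymbol{\pi}^*$; and (ii) confirming that the Gibbs minimizer is a bona fide probability density on $\mathcal{U} \subseteq \mathbb{R}$, which is exactly where strict positivity of $V^\pi_{xx}$ is essential---without it, $\exp(-h_s^\pi/\xi)$ would fail to be integrable and the pointwise minimization step would break down.
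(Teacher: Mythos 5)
Your proposal is correct and follows essentially the same route as the paper's proof: apply It\^o's formula to $V^\pi$ along the wealth process controlled by $\boldsymbol{\pi}^*$, use the policy-evaluation (Feynman--Kac) equation satisfied by $V^\pi$, and exploit that $\pi^*$ is the Gibbs-form minimizer of $q \mapsto \int [h^\pi + \xi \log q]\, q\, du$, which by completing the square is exactly the Gaussian in Eq.~\ref{eq:PIT-construction-of-pi*}. The only cosmetic difference is that you retain an exact performance-difference identity before invoking the minimization, whereas the paper passes to a pathwise inequality first and then takes expectations at $s=T$; the two computations are equivalent.
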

According to the PIT, we can always upgrade, or at least not downgrade, the investment policy given the value function. So, the optimality of the policy relies on the optimality of the corresponding value function, which we discuss in more detail in Section \ref{sec:algorithm}. 

\section{RL Algorithm} \label{sec:algorithm}
To optimize the value function, we develop an RL algorithm with the policy updating scheme following the PIT (Theorem \ref{thm:EMVRS-PIT}). We realize that the PIT constructs an improved policy distribution based on the {\it market parameters}, stock return volatility $\sigma(t,i)$ and Sharpe Ratio $\rho(t,i)$ at time $t$ and regime $i$, which are inaccessible in practice. Meanwhile, the optimal value function defined in Theorem \ref{thm:EMVRS-solution} also depends on the market parameters, although indirectly through the $P, H, C, D$ functions. Thus, the optimization of the value function is a matter of learning the market parameters. In other words, if we did know the ``true" market parameters, we could turn off policy exploration (i.e., setting $\xi = 0$) and adopt the optimal investment strategy of the classical MV problem in Eq. (\ref{eq:MVRS-optimal-control-0}). 


For ease of presentation, we hereafter suppose that there are only two regimes, i.e., $l=2$ and $\{\alpha_t\}_{t\in[0,T]} \in \{1,2\}$. We further assume that the market parameters and the interest rates are constant in the same regime, regardless of the time. That is, for any $t_1, t_2 \in [0,T], i \in \{1,2\}$
$$\sigma(t_1,i) = \sigma(t_2,i) =: \sigma_i; \quad \rho(t_1,i) = \rho(t_2,i) =: \rho_i; \quad r(t_1, i) = r(t_2, i) = r_i.$$
So, the value function can be seen as a function of $\theta := (\sigma_1, \sigma_2, \rho_1, \rho_2)$,
\begin{equation} \label{eq:EMVRS-optimal-value-function-reparam}
\begin{split}
     V^\theta(t,x,i) := P^\theta(t,i)[x+(\lambda-z)H^\theta(t,i)]^2 + (\lambda-z)^2 C^\theta(t,i) + D^\theta(t,i) - \lambda^2,
\end{split}
\end{equation}
where $P^\theta, H^\theta, C^\theta, D^\theta$ are the solution to the system of ODEs in Eq. \ref{eq:MVRS-P(t,i)-ODE}, \ref{eq:MVRS-H(t,i)-ODE}, \ref{eq:EMVRS-C(t,i)} and \ref{eq:EMVRS-D(t,i)}. We use the superscript $\theta$ to denote the direct dependence of $P,H,C,D$ on $\theta$ and the indirect dependence of $V^*$ on $\theta$. Denote $\{X_t^\theta\}_{t\in[0,T]}$ as the portfolio value process that follows the investment policy $\boldsymbol{\pi}^\theta$. Then, the PIT-inferred policy distribution $\boldsymbol{\pi}^\theta := \{\pi_t^\theta\}_{t\in[0,T]}$ is reparameterized as
\begin{equation}
\begin{split} \label{eq:PIT-construction-of-pi-reparam}
    \pi^\theta_t (u; i) & := N\left(u \left| -\frac{\rho_i V_x^\theta(t,x,i)}{\sigma_i V_{xx}^\theta(t,x,i)}, \frac{\xi}{\sigma^2_i V_{xx}^\theta(t,x,i)} \right. \right) \\
    & = N\left(u \left| - \frac{\rho_i}{\sigma_i} [x + (\lambda-z) H^\theta(t,i)]^2, \frac{\xi}{2 \sigma_i^2 P^\theta(t,i)}
    \right.\right),
\end{split}
\end{equation}
given $X_t^\theta = x \in \mathbb{R}$ and $\alpha_t = i\in \{1,2\}$.

In the rest of this section, we will introduce two updating schemes for the market parameters --- the Temporal Difference (TD) Learning (Section \ref{sec:TD-learning}) and the Orthogonality Condition Learning (Section \ref{sec:OC-learning}). The TD learning was originally introduced by \citet{sutton1988learning}, and later \citet{wang2020continuous} applied it for the RL training of their EMV problem. More recently, \citet{JMLR:v23:21-0947} argued against the appropriateness of TD learning in stochastic control problems and proposed to leverage the martingality of the value function in the RL training process. We wrap them up with a training algorithm and give the corresponding updating scheme for the market parameters under a simulated market setting in Section \ref{sec:updating-scheme-training-algorithm}. 
We show both theoretically and empirically (in the next section) that the TD Learning is not suitable for the EMVRS problem.
Finally, we conclude this section with some remarks on how our algorithm can be amended to train EMVRS on real market data in Section \ref{sec:training-on-real-data}. 


\subsection{Temporal Difference (TD) Learning} \label{sec:TD-learning}
We know by Bellman's Principle of Optimality that the value function has the recursive form, for $0 \leq t \leq s \leq T, x\in\mathbb{R}, i \in \{1,2\}$
\begin{align} 
    V^*(t,x,i) & = \mathbb{E}\left[\left. V^*(s, X^{\pi}_s, \alpha_s) + \xi \int_{t}^s \int_{\mathcal{A}} \pi_k^*(u|\alpha_k) \log \pi_k^*(u|\alpha_k) du dk \right| X^\pi_t = x, \alpha_t = i \right] \label{eq:DPP-recursive-form-of-V*} \\
    & = \mathbb{E}\left[\left. V^*(s, X^{\pi}_s, \alpha_s) - \xi \int_{t}^s \frac{1}{2} \log \left(\frac{\pi e \xi}{\sigma_{\alpha_k}^2 P(t,\alpha_k)} \right) dk \right| X^\pi_t = x, \alpha_t = i \right],
\end{align}
where the second equality is a direct substitution of the entropy of the Gaussian distribution $\pi^*_k$.
This further gives an expectation of the {\it temporal difference} in the value function from $t$ to $s$: 
\begin{equation*}
    \mathbb{E}\left[\left. \frac{V^*(s, X^{\pi}_s, \alpha_s) - V^*(t,x,i)}{s-t} - \frac{\xi}{s-t} \int_{t}^s \frac{1}{2} \log \left(\frac{\pi e \xi}{\sigma_{\alpha_k}^2 P(t,\alpha_k)} \right) dk \right| X^\pi_t = x, \alpha_t = i \right] = 0.
\end{equation*}
Taking $s \downarrow t$, we can intuitively treat the left-hand-side as the expectation of an instantaneous improvement of the value function from at time $t$, following the investment policy $\pi^*_t$. Such an expectation being zero at all $t \in [0,T]$ is considered ideal because this implies no further improvement is attainable. While this expectation is intractable in the continuous time setting, 
Temporal Difference (TD) Learning allows us to approximate this expectation by its discretized counterpart. 

Consider a discretization of the investment horizon $0 = t_0 < \cdots t_k < \cdots t_{K} = T$, with mesh size equal to $\Delta t := t_{k+1} - t_k$ and $K:= \frac{T}{\Delta t}$. We define the {\it TD loss function} as the mean square sum of the temporal differences measured at $\{t_k\}_{k=0,\cdots,K}$. 
\begin{equation}\label{eq:TD-loss}
\begin{split}
    TD(\theta) = \frac{1}{2} \mathbb{E} \left[\sum_{k=0}^{K-1} \left(\frac{V^\theta(t_{k+1}, X_{t_{k+1}}^\theta, \alpha_{t_{k+1}}) - V^\theta(t_{k}, X_{t_{k}}^\theta, \alpha_{t_{k}})}{\Delta t} - \frac{\xi}{2} \log \left(\frac{\pi e \xi}{\sigma_{\alpha_{t_k}}^2 P^\theta(t_k,\alpha_{t_k})} \right) 
    \right)^2 \Delta t
    \right].
\end{split}
\end{equation}
The above expectation is taken over the filtered probability space of the market and the policy exploration, which can be approximated by simulating trajectories of portfolio values $\{X_{t_k}^\theta\}_{k=1}^K$ and the Markovian regimes $\{\alpha_{t_k}\}_{k=1}^K)$. 
We use Stochastic Gradient Decent (SGD) to minimize the TD loss, which we describe in more detail in Section \ref{sec:updating-scheme-training-algorithm}. 

Here, we remind the readers that the recursive form (Eq. \ref{eq:DPP-recursive-form-of-V*}) only holds for the optimal value function $V^*$, which is supposed to be attained when the market parameters $\theta$ converge to the ``grounding true" values $\theta_{true}$. On the other hand, if we define a process
\begin{equation}
    M^*_t := V^*(t, x, i) + \int_0^t \xi \int_{\mathcal{A}} \pi_k^*(u|\alpha_k) \log \pi_k^*(u|\alpha_k) du dk
\end{equation}
given $X_t^\pi = x \in \mathbb{R}$ and $\alpha_t = i\in \{1,2\}$, then $M^* := \{M^*_t\}_{t\in[0,T]}$ is a martingale according to Eq. \ref{eq:DPP-recursive-form-of-V*}, with the following dynamics
\begin{equation*}
\begin{split}
    d & M_t^* = dV^*(t,x, i) + \left(\xi \int_{\mathcal{A}} \pi_k^*(u|\alpha_k) \log \pi_k^*(u|\alpha_k) du \right) dt \\
    & = \Biggl\{ 
        V_t^*(t,x,i) + V_x^*(t,x,i)r_i x + \sum_{j=1}^2 q_{ij} V^*(t,x,i) - \frac{1}{2} \frac{\rho_i^2 (V_x^*(t,x,i))^2}{V_{xx}^*(t,x,i)} - \frac{\xi}{2} \log \left( \frac{2\pi \xi}{\sigma_i^2 V_xx^*(t,x,i)}\right)
    \Biggr\} dt \\
    & + \Biggl\{ V_x^*(t,x,i) \sigma_i \sqrt{\frac{\xi}{\sigma_i^2 V_{xx}^*(t,x,i)} + \left(\frac{\rho_i V_x^*(t,x,i)}{\sigma_i V_{xx}^*(t,x,i)}\right)^2}
    \Biggr\} dW_t.
\end{split}
\end{equation*}
Based on the TD loss (Eq. \ref{eq:TD-loss}), we have:
\begin{align*}
    & \frac{1}{\Delta t} \sum_{k=0}^{K-1} \left(V^*(t_{k+1}, X_{t_{k+1}}^\pi, \alpha_{t_{k+1}}) - V^*(t_{k}, X_{t_{k}}^\pi, \alpha_{t_{k}}) - \frac{\xi}{2} \int_{t_k}^{t_{k+1}} \log \left(\frac{\pi e \xi}{\sigma_{\alpha_{s}}^2 P(s,\alpha_{s})} \right) ds + \mathcal{O}(\Delta t^2) 
    \right)^2 \\
    & \approx \frac{1}{\Delta t} \langle M^* \rangle_T = \frac{1}{\Delta t} \int_0^T (V_x^*(t, X_t^\pi, \alpha_t))^2 \sigma_{\alpha_t}^2 \left(\frac{\xi}{\sigma_{\alpha_t}^2 V_{xx}^*(t, X_t^\pi, \alpha_t)} + \left(\frac{\rho_{\alpha_t}^2 (V_x^*(t, X_t^\pi, \alpha_t))^2}{\sigma_{\alpha_t}^2 (V_{xx}^*(t, X_t^\pi, \alpha_t))^2}\right)^2\right) dt.  
\end{align*}
This means that training with the TD loss is equivalent to minimizing the quadratic variation of $M^*$, which is not zero and should not be minimized either. This implies that minimizing the TD loss is inadequate for the EMVRS problem. Our findings agree with the arguments in \cite{JMLR:v23:21-0947}, although \cite{wang2020continuous} uses a similar TD loss for the EMV problem. This motivates us to proceed with another method to learn the market parameters. 

\subsection{Orthogonality
Condition (OC) Learning}\label{sec:OC-learning}
For some market parameters $\theta$ and their ``grounding truth" $\theta_{true} := (\sigma_{true, 1}, \sigma_{true, 2}, \rho_{true, 1}, \rho_{true, 2})$, we define
\begin{equation} \label{eq:M_theta_process}
\begin{split}
    M^\theta_t & := V^\theta(t, X_t^\theta, \alpha_t) + \int_0^t \xi \int_{\mathcal{A}} \pi_k^\theta(u|\alpha_k) \log \pi_k^\theta(u|\alpha_k) du dk \\
    & = V^\theta(t,X_t^\theta, \alpha_t) - \frac{\xi}{2} \int_0^t \log\left(\frac{\pi e\xi}{\sigma^2_{\alpha_{k}} P^\theta(k, \alpha_{k})} \right) dk.
\end{split}
\end{equation}
We remind the readers that the portfolio value is derived from the market dynamics, which should use $\theta_{true}$. So, the portfolio value process is rewritten from Eq. \ref{eq:EMVRS-wealth-process-exploratory} to
\begin{align}
    dX_t^\theta & = [r_{\alpha_t} X_t^\theta + \int_\mathcal{A} \rho_{true,\alpha_t}\sigma_{true, \alpha_t} \cdot u \cdot \pi_t^\theta(u|\alpha_t) du] dt + \sqrt{\int_\mathcal{A} \sigma_{true, \alpha_t}^2 u^2 \pi_t^\theta(u|\alpha_t)du} dW_t \\
    \begin{split} \label{eq:EMVRS-wealth-process-with-theta-true}
    & = \Biggl\{ r_{\alpha_t} X_t^\theta + \rho_{true, \alpha_t} \sigma_{true, \alpha_t} \left(-\frac{\rho_{\alpha_t}}{\sigma_{\alpha_t}}[X_t^\theta + (\lambda-z) H^\theta(t,\alpha_t)]\right) \Biggr\} dt \\
    & + \Biggl\{\sigma_{true, \alpha_t} \sqrt{\frac{\rho^2_{\alpha_t}}{\sigma^2_{\alpha_t}} [X_t^\theta + (\lambda-z) H^\theta(t,\alpha_t)]^2 + \frac{\xi}{2\sigma^2_{\alpha_t} P^\theta(t,\alpha_t)}} \Biggr\} dW_t. 
    \end{split} 
\end{align}
Applying It\^o's Formula and after some calculations, we get the drift term of $dM^\theta$
\begin{align*}
  & \Biggl\{ \dot{P}^\theta(t,\alpha_t) + P^\theta(t,\alpha_t) \sigma_{true, \alpha_t}^2 \frac{\rho^2_{\alpha_t}}{\sigma^2_{\alpha_t}} - 2P^\theta(t,\alpha_t) \rho_{true, \alpha_t} \sigma_{true, \alpha_t} \frac{\rho_{\alpha_t}}{\sigma_{\alpha_t}} \\
  & + 2P^\theta(t,\alpha_t) r_{\alpha_t} + \sum_{j=1}^l q_{\alpha_tj} P^\theta(t,j) \Biggr\} \times [X_t + (\lambda-z) H^\theta(t,\alpha_t)]^2 \\
  + & \Biggl\{ \dot{H}^\theta(t,\alpha_t) - r_{\alpha_t} H^\theta(t,\alpha_t) + \frac{1}{P^\theta(t,\alpha_t)} \sum_{j=1}^l q_{\alpha_tj} P^\theta(t,j) [H^\theta(t,j) - H^\theta(t,\alpha_t)] 
  \Biggr\} \\
  & \times 2(\lambda-z) P^\theta(t,\alpha_t) [X_t +(\lambda-z) H^\theta(t,\alpha_t)] \\
  + & \Biggl\{ 
  \dot{C}^\theta(t,\alpha_t) + \sum_{j=1}^l q_{\alpha_t j} \left[P^\theta(t,j)[H^\theta(t,j) -H^\theta(t,\alpha_t)]^2 + C^\theta(t,j) \right] 
  \Biggr\} \times (\lambda-z)^2 \\
  + & \Biggl\{\dot{D}^\theta(t,\alpha_t) + \sum_{j=1}^l q_{\alpha_tj} D^\theta(t,j) - \frac{\lambda}{2} \log \left(\frac{\pi \lambda}{\sigma^2_{\alpha_t} P^\theta(t, \alpha_t)} \right) \Biggr\},
\end{align*}
which is zero if and only if $\theta = \theta_{true}$, given that $P^\theta, H^\theta, C^\theta, D^\theta$ are the solutions to the ODEs in Eq. \ref{eq:MVRS-P(t,i)-ODE}, \ref{eq:MVRS-H(t,i)-ODE}, \ref{eq:EMVRS-C(t,i)} and \ref{eq:EMVRS-D(t,i)}. That is, $M^\theta_t$ is a martingale if and only if $\theta$ coincides with $\theta_{true}$. 

Moreover, we know that any square-integrable martingale $M:=\{M_t\}_{t\in [0,T]}$ has an orthogonality condition 
\begin{equation}
    \mathbb{E} \left[\int_0^T \zeta_t dM_t \right] = 0,
\end{equation}
where the {\it test function} $\zeta:=\{\zeta_t\}_{t\in[0,T]}$ is any $\{\mathcal{F}_t\}$-adapted process that is square-integrable with respect to $M$. According to \cite{JMLR:v23:21-0947}, this is a necessary and sufficient condition to characterize the martingality of $M$, regardless of the choice of the test function. 
Hence, we define the {\it Orthogonality Condition (OC) loss} by taking $\zeta$ as the partial derivative of the value function with respect to the market parameters and discretizing the continuous time. 
\begin{definition}[Orthogonality Condition (OC) loss]
    Let $\theta \equiv (\theta_1, \theta_2, \theta_3, \theta_4) \equiv (\sigma_1, \sigma_2, \rho_1, \rho_2)$ and for $j = 1, \cdots, 4$, 
\begin{equation}\label{eq:orthog-loss}
\begin{split}
    OC&(\theta_j) = \mathbb{E} \left[\sum_{k=0}^{K-1} \frac{\partial V^\theta(t_k, X_{t_k}, \alpha_{t_k})}{\partial \theta_j} \left(M_{t_{k+1}}^\theta - M_{t_k}^\theta \right) \right]  \\
    & = \mathbb{E} \left[\sum_{k=0}^{K-1} \frac{\partial V^\theta(t_k, X_{t_k}, \alpha_{t_k})}{\partial \theta_j} \left(V^\theta(t_{k+1}, X_{t_{k+1}}^\theta, \alpha_{t_{k+1}}) - V^\theta(t_{k}, X_{t_{k}}^\theta, \alpha_{t_{k}}) - \frac{\xi}{2} \log\left(\frac{\pi e\xi}{\sigma^2_{\alpha_{t_k}} P^\theta(t_k, \alpha_{t_k})} \right) \Delta t 
    \right) \right].
\end{split}
\end{equation}
\end{definition}
Again, the above expectation is taken over the filtered probability space of the market and the policy exploration, which can be approximated by simulating trajectories of portfolio values $\{X_{t_k}^\theta\}_{k=1}^K$ and the Markovian regimes $\{\alpha_{t_k}\}_{k=1}^K)$.

\subsection{Updating Scheme and Training Algorithm} \label{sec:updating-scheme-training-algorithm}
We note that the randomness in our problem comes from three sources --- the market dynamics, the Markovian regimes and the policy exploration. Therefore, it is impractical to exactly compute the TD loss (Eq. \ref{eq:TD-loss}) or the OC loss (Eq. \ref{eq:orthog-loss}), as they are defined to be expectations. Instead, we adopt batch-training over a predetermined number of epochs $N_{epochs}$, which is set to be large enough for market parameters to converge. 

For each epoch $n = 1, \cdots, N_{epochs}$, we fix a realized path of Brownian motion $\{W_{t_k}^{(n)}\}_{k=0}^K$, where each $\Delta W_{t_k}^{(n)} := W^{(n)}_{t_{k+1}} - W^{(n)}_{t_k}$ follows a Gaussian distribution with zero mean and variance $\Delta t$. We hereafter use the superscript $(n)$ to denote a realized path of a process that we fix in epoch $n$. This helps us eliminate the randomness of the market dynamics. Then, we simulate a path of market regimes $\{\alpha_{t_k}^{(n)}\}_{k=0}^K$ following the categorical distribution below, given a randomly selected initial regime $\alpha_{t_0}^{(n)} \in \{1,2\}$ and a predefined Markov Chain generator $Q$. 
\begin{equation} \label{eq:simulate-regimes-path}
    \alpha_{t_{k+1}}^{(n)} = 
    \begin{cases}
        1, \text{ with probability } P(\alpha_{t_{k+1}}^{(n)} = 1|\alpha_{t_{k}}^{(n)}) = p_{\alpha_{t_{k}}^{(n)} 1} \\
        2, \text{ with probability } P(\alpha_{t_{k+1}}^{(n)} = 2|\alpha_{t_{k}}^{(n)}) = p_{\alpha_{t_{k}}^{(n)} 2}
    \end{cases}
\end{equation}
Here $(p_{ij})_{i=1,2}^{j=1,2} =: P$ is the transition matrix of the Markov Chain, which is the matrix exponential of the Markov Chain generator, i.e., $P = e^{Q}$. This eliminates the randomness of the Markovian regimes in epoch $n$. Finally, we use the ``grounding true" market parameters $\theta_{true}$ and the current market parameters $\theta^{(n)}$ to compute a path of the portfolio value $\{X^{(n)}_{t_k}\}_{k=0}^K$, given a predetermined initial portfolio value $X_{t_0}^{(n)} = x_0 > 0$
\begin{equation} \label{eq:simulate-portfolio-value-path}
    X_{t_{k+1}}^{(n)} = X_{t_{k}}^{(n)} + \Delta X_{t_k}^{(n)}, \text{ for } k = 1, \cdots, K-1
\end{equation}
where $\Delta X_{t_k}^{(n)}$ is a discretized version of Eq. \ref{eq:EMVRS-wealth-process-with-theta-true} 
\begin{align*}
    \Delta X_{t_k}^{(n)} & = \Biggl\{ r_{\alpha_{t_k}^{(n)}} X_{t_k}^{(n)} + \rho_{true, \alpha_{t_k}^{(n)}} \sigma_{true, \alpha_{t_k}^{(n)}} \left(-\frac{\rho_{\alpha_{t_k}^{(n)}}}{\sigma_{\alpha_{t_k}^{(n)}}}\left[X_{t_k}^{(n)} + (\lambda-z) H^\theta\left(t_k,\alpha_{t_k}^{(n)}\right)\right]\right) \Biggr\} \Delta t \\
    & + \Biggl\{\sigma_{true, \alpha_{t_k}^{(n)}} \sqrt{\frac{\rho^2_{\alpha_{t_k}^{(n)}}}{\sigma^2_{\alpha_{t_k}^{(n)}}} \left[X_{t_k}^{(n)} + (\lambda-z) H^\theta \left(t_k,\alpha_{t_k}^{(n)} \right)\right]^2 + \frac{\xi}{2\sigma^2_{\alpha_{t_k}^{(n)}} P^\theta\left(t_k,\alpha_{t_k}^{(n)}\right)}} \Biggr\} \Delta W_{t_k}^{(n)}. 
\end{align*}
This helps us eliminate the stochasticity of policy exploration within an epoch. We hereby emphasize the hybrid usage of $\theta_{true}$ and $\theta^{(n)}$ when simulating paths of portfolio value. The $\theta_{true}$, i.e., $(\rho_{true}, \sigma_{true})$, in the equation above was directly derived from the stock dynamics in Eq. \ref{eq:problem-formulation-stock-dynamics}. This is supposed to be the ``grounding true" parameters because the stock price is driven by the ``true" market model, despite that we cannot observe the parameters of it. This is the only place where we used $\theta_{true}$ in the algorithm.


Up to this point, we have collected two paths, the portfolio values $\{X^{(n)}_{t_k}\}_{k=0}^K$ and market regimes $\{\alpha_{t_k}^{(n)}\}_{k=0}^K$, with which we can solve for $\{P^{\theta, (n)}(t_k, \alpha_{t_k}^{(n)})\}, \{H^{\theta, (n)}(t_k, \alpha_{t_k}^{(n)})\}, \{C^{\theta, (n)}(t_k, \alpha_{t_k}^{(n)})\}$ and $\{D^{\theta, (n)}(t_k, \alpha_{t_k}^{(n)})\}$ the system of four ODEs in Eq. \ref{eq:MVRS-P(t,i)-ODE}, \ref{eq:MVRS-H(t,i)-ODE}, \ref{eq:EMVRS-C(t,i)} and \ref{eq:EMVRS-D(t,i)}. We can thereby compute the value function $\{V_{t_k}^{\theta, (n)}\}_{k=0}^K$ via Eq. \ref{eq:EMVRS-optimal-value-function-reparam} and the optimal Lagrange multiplier $\lambda^{(n)}$ via a modified version of Eq. \ref{eq:EMVRS-optimal-lambda} 
\begin{equation} \label{eq:EMVRS-optimal-lambda-updated}
    \lambda^{(n)} = \frac{z - P^{(n)}(t_0, \alpha_{t_0}^{(n)}) H^{(n)}(t_0, \alpha_{t_0}^{(n)}) x_0}{P^{(n)}(t_0, \alpha_{t_0}^{(n)}) H^{(n)}(t_0, \alpha_{t_0}^{(n)})^2 + C^{(n)}(t_0, \alpha_{t_0}^{(n)}) - 1} + z.
\end{equation}

We note that it is impractical to compute the TD loss and the OC loss in Eq. \ref{eq:TD-loss} and Eq. \ref{eq:orthog-loss}, due to the multiple sources of randomness. So, we instead compute the {\it realized TD loss} and {\it realized OC loss}, respectively denoted by $TD(\theta^{(n)}; \{X_{t_k}^{(n)}\}, \{\alpha_{t_k}^{(n)}\})$ and $OC(\theta^{(n)}_j; \{X_{t_k}^{(n)}\}, \{\alpha_{t_k}^{(n)}\})$, for $j=1,\cdots,4$.
\begin{align}
\begin{split} \label{eq:TD-loss-realized}
    TD& (\theta^{(n)}; \{X_{t_k}^{(n)}\}, \{\alpha_{t_k}^{(n)}\}) \\
    & = \frac{1}{2} \sum_{k=0}^{K-1} \left(\frac{V^{\theta, (n)}\left(t_{k+1}, X_{t_{k+1}}^{(n)}, \alpha_{t_{k+1}}^{(n)}\right) - V^{\theta, (n)}\left(t_{k}, X_{t_{k}}^{(n)}, \alpha_{t_{k}}^{(n)}\right)}{\Delta t} - \frac{\xi}{2} \log \left(\frac{\pi e \xi}{\sigma_{\alpha_{t_k}^{(n)}}^2 P^{\theta, (n)} \left(t_k,\alpha_{t_k}^{(n)}\right)} \right) 
    \right)^2 \Delta t
\end{split} \\
\begin{split} \label{eq:orthog-loss-realized}
    OC& (\theta^{(n)}_j; \{X_{t_k}^{(n)}\}, \{\alpha_{t_k}^{(n)}\}) 
    = \sum_{k=0}^{K-1} \frac{\partial V^{\theta,(n)}\left(t_k, X_{t_k}^{(n)}, \alpha_{t_k}^{(n)}\right)}{\partial \theta_j} \\
    & \quad \times \Biggl[V^{\theta,(n)}\left(t_{k+1}, X_{t_{k+1}}^{(n)}, \alpha_{t_{k+1}}^{(n)}\right) - V^{\theta,(n)}\left(t_{k}, X_{t_{k}}^{(n)}, \alpha_{t_{k}}^{(n)}\right) - \frac{\xi}{2} \log\left(\frac{\pi e\xi}{\sigma^2_{\alpha_{t_k}^{(n)}} P^{\theta,(n)}\left(t_k, \alpha_{t_k}^{(n)}\right)} \right) \Delta t 
    \Biggr]. 
\end{split}
\end{align}

To update the market parameters, we apply the Stochastic Gradient Decent on the TD loss and the Stochastic Optimization method on the OC loss. For $j=1,\cdots,4$, the updating scheme of $\theta$ with the TD loss is
\begin{equation} \label{eq: TD-update-scheme}
    \theta_j^{(n+1)} \leftarrow \theta_j^{(n)} - \eta_{TD,j} \frac{\partial TD(\theta^{(n)}; \{X_{t_k}^{(n)}\}, \{\alpha_{t_k}^{(n)}\})}{\partial \theta_j},
\end{equation}
and the updating scheme of $\theta$ with the OC loss is
\begin{equation} \label{eq: OC-update-scheme}
    \theta_j^{(n+1)} \leftarrow \theta_j^{(n)} + \eta_{OC,j} OC(\theta^{(n)}_j; \{X_{t_k}^{(n)}\}, \{\alpha_{t_k}^{(n)}\}), 
\end{equation}
where $\eta_{TD, j}, \eta_{OC,j} > 0$ are the learning rates for $\theta_j$ when using the TD loss and the OC loss respectively. In practice, the partial derivative is nontrivial and does not have an explicit form in the EMVRS problem. So, we approximate it via the central difference method
\begin{equation*}
    \frac{\partial TD(\theta^{(n)}; \{X_{t_k}^{(n)}\}, \{\alpha_{t_k}^{(n)}\})}{\partial \theta_j} \approx \frac{TD(\theta^{(n)}_{j+}; \{X_{t_k}^{(n)}\}, \{\alpha_{t_k}^{(n)}\}) - TD(\theta^{(n)}_{j-}; \{X_{t_k}^{(n)}\}, \{\alpha_{t_k}^{(n)}\})}{2 \epsilon_j},
\end{equation*}
where $\epsilon_j >0$ is a small step size and $\theta^{(n)}_{j\pm}$ equals $\theta^{(n)}$ expect the $j$-th entry is replace by $\theta_j \pm \epsilon_j$.

Finally, we summarize our method in the following algorithm. 

\begin{algorithm}[H]
\SetAlgoLined
Initialize the hyperparameters: number of epochs $N_{epochs}$, the investment horizon $T$, the mesh size of the continuous time discretization $\Delta t$, the Markov Chain generator $Q$, the exploration parameter $\xi$, the initial portfolio value $x_0$, the target terminal portfolio value $z$, the learning rates $\eta = (\eta_{1}, \eta_{2}, \eta_{3}, \eta_{4})$, the ``grounding true" market parameters $\theta_{true} = (\theta_{true,1}, \theta_{true,2}, \theta_{true,3}, \theta_{true,4})$ $\equiv (\sigma_{true,1}, \sigma_{true,2}, \rho_{true,3}, \rho_{true,4})$.

Initialize the market parameters $\theta^{(0)} = (\theta_1^{(0)}, \theta_2^{(0)}, \theta_3^{(0)}, \theta_4^{(0)}) \equiv (\sigma_1^{(0)}, \sigma_2^{(0)}, \rho_1^{(0)}, \rho_2^{(0)})$ and the interest rates $(r_1, r_2)$. 

\For{$n=0,\cdots,N_{epochs}-1$}{
Fix a realized path of Brownian motion $\{W_{t_k}^{(n)}\}_{k=0}^K$, with $\Delta W_{t_k}^{(n)} := W^{(n)}_{t_{k+1}} - W^{(n)}_{t_k} \sim N(0, \Delta t)$. 

Simulate $\{X^{(n)}_{t_k}\}_{k=0}^K$ and $\{\alpha_{t_k}^{(n)}\}_{k=0}^K$ via Eq. \ref{eq:simulate-portfolio-value-path} and \ref{eq:simulate-regimes-path}.

Solve the four systems of ODEs in Eq. \ref{eq:MVRS-P(t,i)-ODE}, \ref{eq:MVRS-H(t,i)-ODE}, \ref{eq:EMVRS-C(t,i)} and \ref{eq:EMVRS-D(t,i)} for $\{P^{\theta, (n)}(t_k, \alpha_{t_k}^{(n)})\}, \{H^{\theta, (n)}(t_k, \alpha_{t_k}^{(n)})\}, \{C^{\theta, (n)}(t_k, \alpha_{t_k}^{(n)})\}$ and $\{D^{\theta, (n)}(t_k, \alpha_{t_k}^{(n)})\}$. 

Compute $\{V_{t_k}^{\theta, (n)}\}_{k=0}^K$ via Eq. \ref{eq:EMVRS-optimal-value-function-reparam} and $\lambda^{(n)}$ via Eq. \ref{eq:EMVRS-optimal-lambda-updated}. 

\If{TD learning}{
    Update $\theta_j^{(n)}$ via the updating scheme \ref{eq: TD-update-scheme}, for $j=1,\cdots,4$.
}
\If{OC learning}{
    Update $\theta_j^{(n)}$ via the updating scheme \ref{eq: OC-update-scheme}, for $j=1,\cdots,4$.
}
}
\caption{The Training Algorithm for EMVRS with Simulated Data} 
\label{alg:training-algorithm}
\end{algorithm}

\subsection{Training on Real Data} \label{sec:training-on-real-data}
If we can observe real market data, the source of randomness of our problem reduces to policy exploration, since the observed stock prices $\{S_{t_k}\}_{k=0}^K$ and the interest rates $\{r_{t_k}\}_{k=0}^K$ over a predetermined observation window are fixed. The market regimes $\{\alpha_{t_k}\}_{k=0}^K$, although not directly observable, can be estimated via fitting a Hidden Markov Model and implementing the Viterbi Algorithm (\citet{viterbi1967error}). The Viterbi Algorithm outputs the maximum-a-posterior estimate of the hidden state sequence, which can be taken as a surrogate for the market regimes. 

The training algorithm with real data is the same as Algorithm \ref{alg:training-algorithm}, except we no longer need to sample paths of Brownian motion, nor do we have knowledge of the ``ground true" market parameters. Instead, we generate a path of portfolio values $\{X^{(n)}_{t_k}\}_{k=0}^K$ by first sampling investment action $u^{(n)}_{t_k}$ from the current policy $\pi^\theta_t(\cdot; \alpha^{(n)}_{t_k})$ and then iteratively computing the next portfolio value via 
\begin{equation*}
    X^{(n)}_{t_{k+1}} = u^{(n)}_{t_k} \frac{S_{t_{k+1}}}{S_{t_k}} + (X^{(n)}_{t_k} - u^{(n)}_{t_k}) (1 + r_{t_k} \Delta t)
\end{equation*}
for $k = 0, \cdots, K-1$ and given $X_0^{(n)} = x_0 > 0$. The TD loss and OC loss are defined in the same way as in Eq. \ref{eq:TD-loss} and \ref{eq:orthog-loss}, except the expectations are taken over the filtered probability space of policy exploration only and are conditioned on the observed real market data.


\section{Numerical Results} \label{sec:simulation}

In this section, we demonstrate the performance of our proposed method with both simulated data and real data. 
Two key lessons emerge from the results. 
First, OC learning leads to the parameters' convergence to their corresponding ``grounding truth" in the simulation study, where the ``grounding truths" are given upon initialization. However, TD learning does not guarantee such convergence. Second, EMVRS outperforms EMV on the real data under different investment constraint settings, achieving higher mean terminal portfolio value and relatively lower volatility.

\subsection{Simulation Study: Comparing TD and OC Learning} \label{sec:simulation-compare-TD-OC}

Following Algorithm \ref{alg:training-algorithm}, we first contrast the parameter convergence of TD learning and OC learning with a toy simulation in which only the mean of stock returns are different in the two market regimes and all other market parameters are set equal. 
We consider a one-year investment horizon ($T=1$) with 10 equal-step-size time points throughout the year for portfolio rebalancing ($\Delta t = \nicefrac{1}{10}$). The investor starts with $x_0 = \$1$ and sets a target of $z=\$1.4$ to be achieved by the end of the year. During the investment period, the investor explores the investment strategies with the exploration weight equal to $\xi=0.5$. To mimic the stock market, we consider it has two regimes, ``good" $(i=1)$ and ``bad" $(i=2)$, which are alternating over time on a Markov Chain with generator $Q = \left(\begin{matrix} -1 & 1 \\ 1 & -1 \end{matrix}\right)$. The ``grounding true" mean and volatility of the market are $(\mu_{true, 1}, \sigma_{true,1}) = (0.2, 0.2)$ in the ``good" state and $(\mu_{true, 2}, \sigma_{true, 2}) = (-0.1, 0.2)$ in the ``bad" state. For simplicity, suppose the interest rates are $(r_1, r_2) = (0,0)$, hence the ``grounding true" Sharpe ratios are 
\begin{align*}
    & \rho_{true, 1} = \frac{\mu_1 - r_1}{\sigma_1} = \frac{0.2 - 0}{0.2} = 1\\
    & \rho_{true, 2} = \frac{\mu_2 - r_2}{\sigma_2} = \frac{-0.1 - 0}{0.2} = -0.5
\end{align*}
To parameterize the value functions, we initialize the market parameters at \begin{equation*}
    \theta^{(0)} = (\theta_1^{(0)}, \theta_2^{(0)}, \theta_3^{(0)}, \theta_4^{(0)}) \equiv (\sigma_1^{(0)}, \sigma_2^{(0)}, \rho_1^{(0)}, \rho_2^{(0)}) = (0.1, 0.1, 0.8, -0.3).
\end{equation*}
Finally, to ensure numerical stability during the training process, we set constraints for the range of the volatility $\sigma_1, \sigma_2 \in [0.1, 1]$ and the range of the Sharpe ratios $\rho_1, \rho_2 \in [-2, 2]$. 
We summarize the configuration of this pilot simulation in Table \ref{tab:config-of-pilot-simulation}. We set the learning rates to steadily decrease from their initial values to $1\times 10^{-5}$ over the training epochs.

\begin{table}[h]
    \centering
    \begin{tabular}{c|c}
    \toprule
       Hyperparameters  & Value at Initialization  \\
    \midrule
        $T$ & 1 \\
        $\Delta t$ & 0.1 \\
        $Q$ & $\left(\begin{matrix}
            -1 & 1 \\
            1 & -1 
        \end{matrix}\right)$ \\
        $\xi$ & 0.5 \\
        $x_0$ & 1 \\
        $z$ & 1.4 \\
        $\eta$ for TD loss & $(1\times 10^4, 1\times 10^4, 2\times 10^4, 2\times 10^4)$ \\
        $\eta$ for OC loss & $(1\times 10^4, 1\times 10^4, 1\times 10^3, 1\times 10^3)$ \\ 
        $(r_1, r_2)$ & $(0,0)$ \\
        $\theta_{true}$ & $(\sigma_{true, 1} = 0.2, \sigma_{true,2} = 0.2, \rho_{true,1} = 1.0, \rho_{true,2} = -0.5)$ \\
        $\theta^{(0)}$ & $(0.1, 0.1, 0.8, -0.3)$ \\
        \bottomrule
    \end{tabular}
    \caption{Configuration of the Toy Simulation}
    \label{tab:config-of-pilot-simulation}
\end{table}

The paths of the market parameters over training epochs using the TD loss are given in Figure \ref{fig:TD-loss-convergence-ckpt1}, which indicates that the market parameters did not converge to the ``grounding true" values. While the volatility parameters $(\sigma_1, \sigma_2)$ converged to a wrong level, the Sharpe ratio parameters $(\rho_1, \rho_2)$ diverged, reaching the upper and lower boundaries of the range we set previously. The poor convergence performance of TD learning is as expected. Because minimizing the TD loss is equivalent to minimizing the quadratic variation of $M^\theta$ (Eq. \ref{eq:M_theta_process}), which should not be minimized. 

Figure \ref{fig:OC-loss-convergence-ckpt5} shows the market parameters' updating paths for OC learning. All market parameters converged to their corresponding ``grounding true" level. While we acknowledge that the initial values for the Sharpe ratio parameters $(\rho_1^{(0)}, \rho_2^{(0)}) = (0.8, -0.3)$ are set close to their ``grounding truths" $(\rho_{true, 1}, \rho_{true, 2}) = (1, -0.5)$, we reinitialize them as $(\rho_1^{(0)}, \rho_2^{(0)}) = (0.2, 0.2)$. Training with OC loss again produces converging paths of the market parameters towards the ``grounding true" level, as shown in Figure \ref{fig:OC-loss-convergence-ckpt5-1}. Our results empirically justify our solutions to the EMVRS problem in Theorem \ref{thm:EMVRS-solution}, while simultaneously revealing the superiority of OC learning over TD learning in this problem setting.

\begin{figure}[h]
    \centering
    \includegraphics[width=0.45\textwidth]{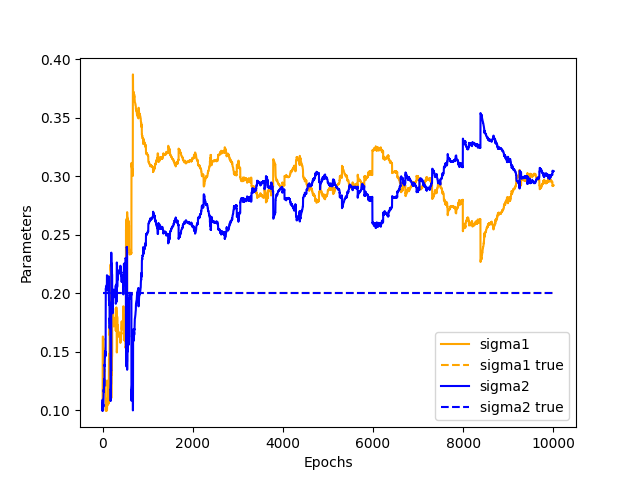}
    \includegraphics[width=0.45\textwidth]{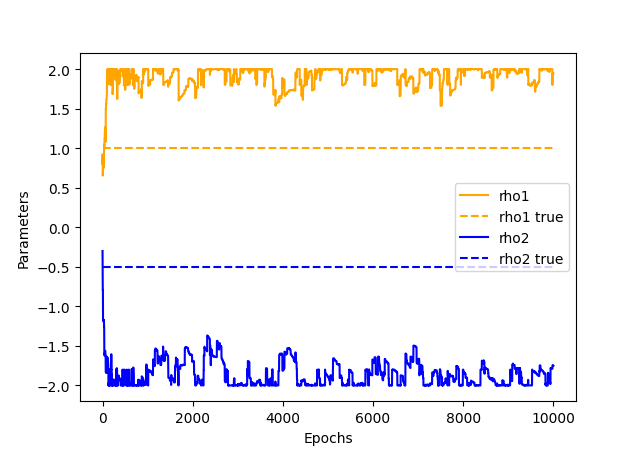}
    \caption{Parameter Convergence of Temporal Difference (TD) Learning. Market parameters are initialized at $\theta^{(0)} = (0.1, 0.1, 0.8, -0.3)$. }
    \label{fig:TD-loss-convergence-ckpt1}
\end{figure}

\begin{figure}[h]
    \centering
    \includegraphics[width=0.45\textwidth]{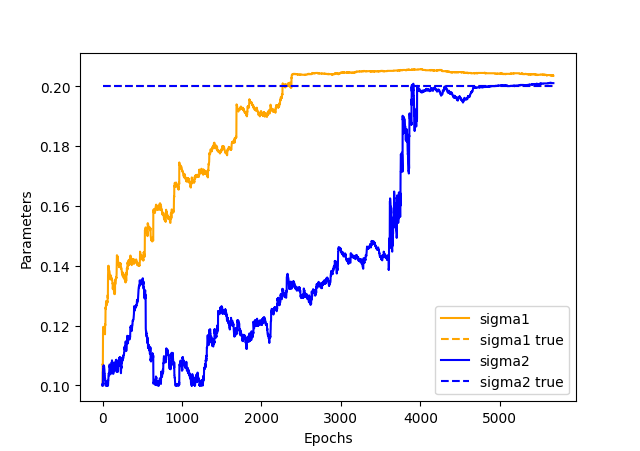}
    \includegraphics[width=0.45\textwidth]{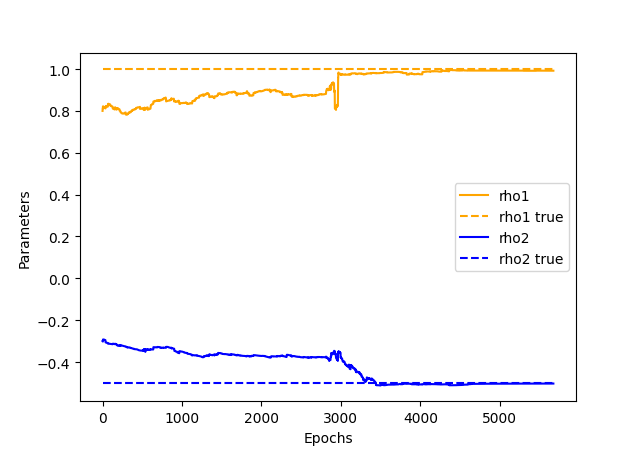}
    \caption{Parameter Convergence of Orthogonality Condition (OC) Learning. Market parameters are initialized at $\theta^{(0)} = (0.1, 0.1, 0.8, -0.3)$.}
    \label{fig:OC-loss-convergence-ckpt5}
\end{figure}

\begin{figure}[h]
    \centering
    \includegraphics[width=0.45\textwidth]{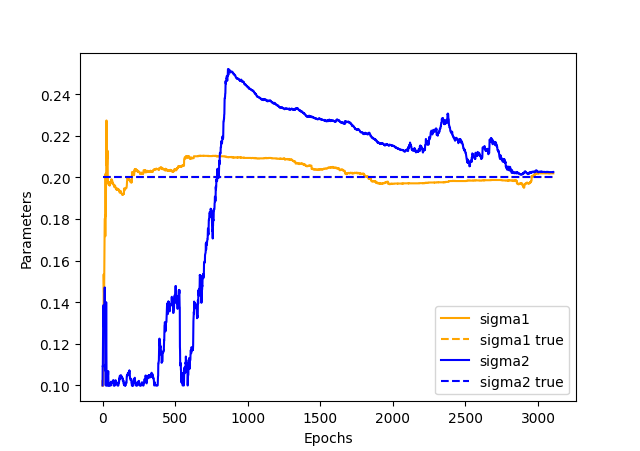}
    \includegraphics[width=0.45\textwidth]{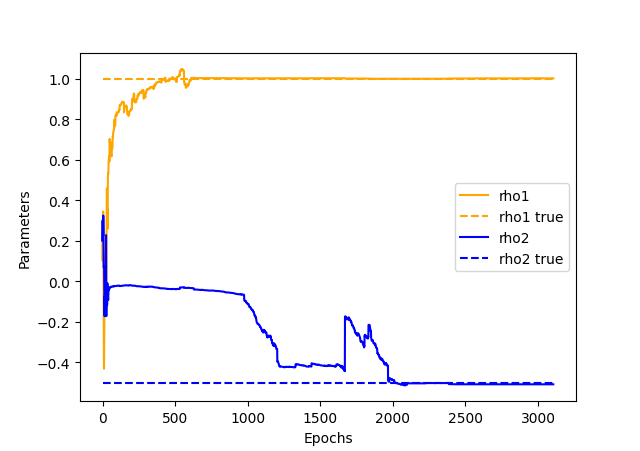}
    \caption{Parameter Convergence Using Orthogonality Condition (OC) Loss. Market parameters are initialized at $\theta^{(0)} = (0.1, 0.1, 0.2, 0.2)$.}
    \label{fig:OC-loss-convergence-ckpt5-1}
\end{figure}

Finally, we examine the robustness of OC learning in a more comprehensive simulation design, where the market regimes are more different from each other.
Table \ref{tab:config-of-full-simulation} summarizes the configuration of this simulation study. We highlight the differences from the previous simulation: we considered the ``grounding true" mean and volatility of the ``good" state market to be $(\mu_{true, 1}, \sigma_{true, 1}) = (0.2, 0.2)$, as the US equity mean return and volatility were recorded at $(18\%, 18.4\%)$ in 1996-2000 according to \citet{bae2014dynamic}. For the ``bad" state, we suppose $(\mu_{true, 2}, \sigma_{true, 2}) = (-0.2, 0.3)$, as the US equity mean return and volatility were recorded at $(-17.4\%, 30.1\%)$ in 2006-2008 according to \citet{bae2014dynamic}. Moreover, we assume the interest rate is lower in the ``good" state and higher in the ``bad" state, hence setting $(r_1, r_2) = (0.01, 0.05)$. So, the ``grounding true" Sharpe ratios are 
\begin{align*}
    & \rho_{true, 1} = \frac{\mu_1 - r_1}{\sigma_1} = \frac{0.2 - 0.01}{0.2} = 0.95, \\
    & \rho_{true, 2} = \frac{\mu_2 - r_2}{\sigma_2} = \frac{-0.2 - 0.05}{0.3} = -0.8\dot{3}.
\end{align*}
We set the market parameters at initialization as
\begin{equation*}
    \theta^{(0)} = (\theta_1^{(0)}, \theta_2^{(0)}, \theta_3^{(0)}, \theta_4^{(0)}) \equiv (\sigma_1^{(0)}, \sigma_2^{(0)}, \rho_1^{(0)}, \rho_2^{(0)}) = (0.1, 0.1, 0.5, -0.5)
\end{equation*}

\begin{table}[H]
    \centering
    \begin{tabular}{c|c}
    \toprule
       Hyperparameters  & Value at Initialization  \\
    \midrule
        $T$ & 1 \\
        $\Delta t$ & 0.1 \\
        $Q$ & $\left(\begin{matrix}
            -1 & 1 \\
            1 & -1 
        \end{matrix}\right)$ \\
        $\xi$ & 0.5 \\
        $x_0$ & 1 \\
        $z$ & 1.4 \\
        $\eta$ for OC loss & $(1\times 10^4, 1\times 10^4, 1\times 10^4, 1\times 10^4)$ \\ 
        $(r_1, r_2)$ & $(0.01, 0.05)$ \\
        $\theta_{true}$ & $(\sigma_{true, 1} = 0.2, \sigma_{true,2} = 0.3, \rho_{true,1} = 0.95, \rho_{true,2} = -0.833)$ \\
        $\theta^{(0)}$ & (0.1, 0.1, 0.5, -0.5) \\
        \bottomrule
    \end{tabular}
    \caption{Configuration of the More Comprehensive Simulation}
    \label{tab:config-of-full-simulation}
\end{table}

The convergence of the market parameters is illustrated in Figure \ref{fig:OC-loss-convergence-ckpt6-3}, which shows that all market parameters converged to a reasonably close range of their corresponding ``grounding truths." We emphasize that even though the market parameters of each regime are diverse in more dimensions, including the stock return volatility and interest rate, the time to convergence wasn't significantly increased compared to that of the toy simulation. 
\begin{figure}[H]
    \centering
    \includegraphics[width=0.45\textwidth]{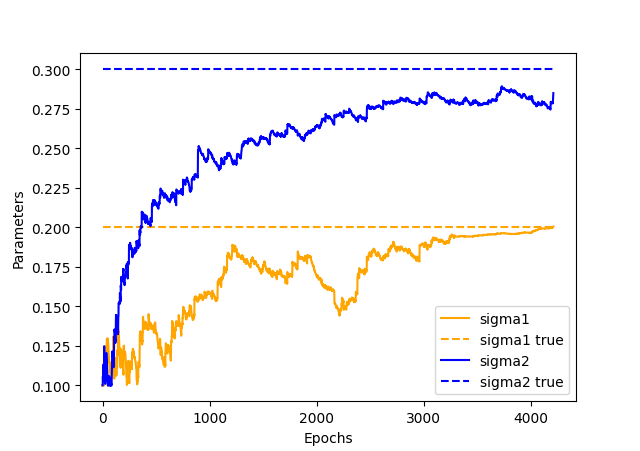}
    \includegraphics[width=0.45\textwidth]{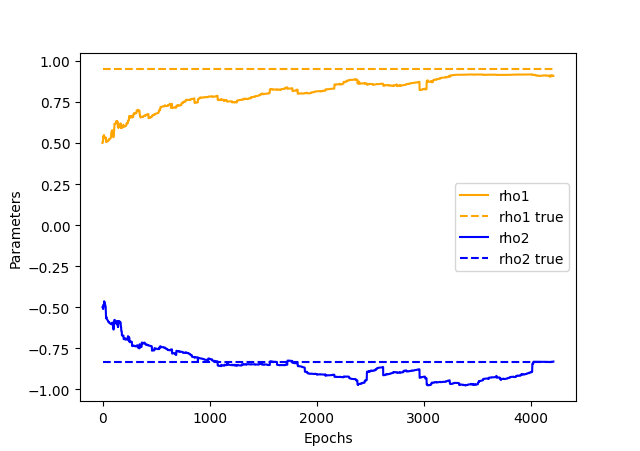}
    \caption{Parameter Convergence of Orthogonality Condition (OC) Learning. Market parameters are initialized at $\theta^{(0)} = (0.1, 0.1, 0.5, -0.5)$.}
    \label{fig:OC-loss-convergence-ckpt6-3}
\end{figure}

\subsection{Performance on Real Data: Comparing EMVRS and EMV}

Having witnessed the parameters' convergence in simulation studies, we evaluate the investment performance of EMVRS on real data, by comparing the mean and volatility of the terminal portfolio values. 
For simplicity, we consider an investor managing a market portfolio and financing under the riskfree interest rate. Hence, we collect from Yahoo Finance daily data of the S\&P500 (ticker: ``\^{}GSPC") market index as the market portfolio, and benchmark the 3-month US Treasury Bill (3mTbill) rate (ticker: ``\^{}IRX") as the riskfree interest rate. 
We consider monthly portfolio rebalancing and set a 10-year rolling window starting from January 1st, 2006, with a one-month step size. So, the first window covers from January 1st, 2006 to December 31st, 2015, the second window covers from February 1st, 2006 to January 31st, 2016, and so on. We keep rolling for two years, hence the last window covers from December 1st, 2007 to November 30th, 2017, resulting in 24 10-year windows. 
Since the rolling windows cover the subprime mortgage crisis and the recovery since then, we broadly categorize the market into two regimes --- a bullish regime and a bearish regime. 
On each rolling window, we apply the Viterbi algorithm \citep{viterbi1967error} 
to identify the market regimes, then train the model using the parameters from the last time frame as the initial points. 
Figure \ref{fig:realdata-training} shows the regime-labeled time series of the first rolling window, where state 0 stands for bullish and 1 stands for bearish. 

\begin{figure}[h]
    \centering
    \includegraphics[width=0.49\linewidth]{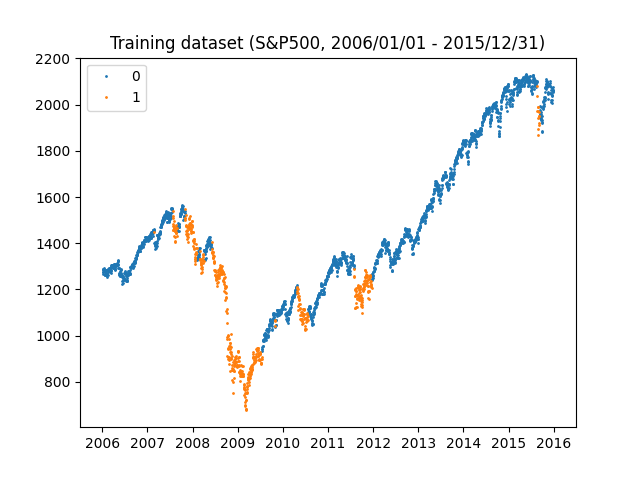}
    \includegraphics[width=0.49\linewidth]{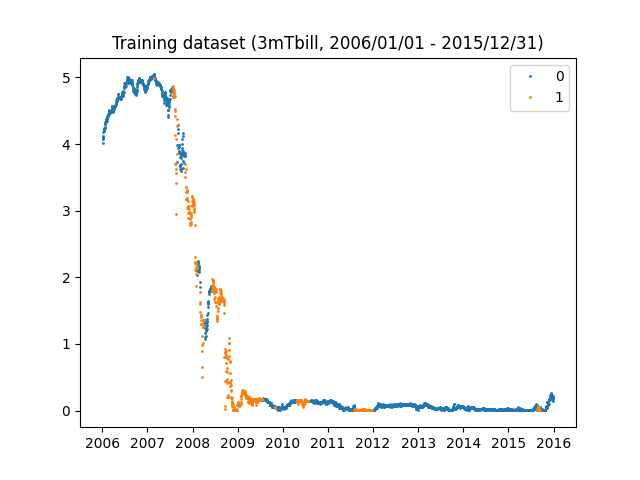}
    \caption{Regime-Identified Financial Time Series for Training (left: S\&P500, right: 3mTbill)}
    \label{fig:realdata-training}
\end{figure}

Due to monthly rebalancing, we mark 120 time points on each 10-year window. 
We suppose the investor starts with an initial wealth $x_0 = \$1$ and targets the terminal portfolio value at $z=\$1.4$ by the end of ten years, which equates to a target annual return of approximately $3.422\%$. In order to maintain numerical stability throughout training, we standardize the S\&P500 index and set an {\it action constraint} at 3, i.e., after sampling an action from the investment policy, we restrict the action value within $[-3x_0, 3x_0] = [-3,3]$.\footnote{This in practice means the investor's leverage ratio is at most 3. The investor can borrow no more than 3 times of their initial wealth $x_0$ to invest in the stock index, or short sell no greater than 3$x_0$ worth of stock index.} We summarize the training configuration and provide the initial parameters for the first rolling window in Table \ref{tab:config-of-realdata}.

\begin{table}[h]
    \centering
    \begin{tabular}{c|c}
    \toprule
       Hyperparameters  & Value at Initialization  \\
    \midrule
        $T$ & 10 \\
        $\Delta t$ & $\frac{1}{12}$ \\
        $\xi$ & 0.5 \\
        $x_0$ & 1 \\
        $z$ & 1.4 \\
        action constraint & 3 \\
        $\eta$ for TD loss & $(1\times 10^3, 1\times 10^3, 1\times 10^3, 1\times 10^3)$ \\ 
        $\eta$ for OC loss & $(1\times 10^3, 1\times 10^3, 1\times 10^3, 1\times 10^3)$ \\ 
        $\theta^{(0)}$ & (0.2, 0.2, 1.0, 1.0) \\
        \bottomrule
    \end{tabular}
    \caption{Configuration of the Real Data Study}
    \label{tab:config-of-realdata}
\end{table}

For comparison, we simultaneously train the EMVRS and EMV via both OC learning and TD learning and backtest the investment performance on the 24 10-year rolling windows. 
We examine the trained models on various investment settings, including action constraints equal to 1, 1.5, 2, 3, and whether short selling is allowed. For example, an action constraint equal to 1 with short selling means the action can take values in $[-x_0, x_0] = [-1,1]$, whereas action constraint equals to 1 without short selling means the action can take values in $[0, x_0] = [0,1]$. We remind the readers that the actions are defined as the money values invested in the market portfolio (i.e., S\&P500), hence an action taking a negative value is a short position. Under each investment setting, we independently trade 5 times over each of the 24 10-year rolling windows, resulting in 100 portfolio value trajectories. We then compute the mean and standard deviation of the last entry of these 100 trajectories as the estimates of the mean and volatility of the terminal portfolio values, respectively, which allows us to compute the annualized portfolio returns and volatilities. 
Since the risk free interest rate is not constant in our problem, we computed the average terminal portfolio value if the investor holds the risk free asset over the 24 10-year rolling windows, which is around $\$1.0747$. This yields the annualized risk free rate at around $0.723\%$. Thereby, we can compute the Sharpe Ratios and we summarize the investment performances on the 24 10-year rolling windows in Table \ref{tab:performance-realdata}. 

Under OC learning, EMVRS outperforms EMV in all of the investment settings we considered. Although we set a low investment target of portfolio return at $3.442\%$, the mean returns of the EMVRS all significantly exceed the predetermined target. Yet, the mean returns of EMV are fairly closed to the target. EMV even fails to reach the target mean return under one investment setting, when action constraint =1 and short selling is not allowed. While both the mean and volatility of the portfolio returns are generally higher for EMVRS than those for EMV, the Sharpe Ratios of EMVRS are significantly higher than those of the EMV.

Under TD learning, EMVRS achieves lower mean portfolio returns but significantly inflated volatilities, which yields the lowest Sharpe Ratios under all investment settings. The performance of EMV is comparable to itself if trained with OC learning, with mean portfolio returns closed to the target. The Sharpe Ratios are not as good as those of EMVRS under OC learning. The poor performance aligns with our previous criticism on TD learning in our problem context. 


Moreover, we also notice that under the same short selling setting, both the mean and volatility of the portfolio returns increase as the action constraint is gradually relaxed, resulting in gradually decreasing Sharpe Ratios. Except for EMVRS with OC learning, the performance on all other model settings demonstrates higher Sharpe Ratios when short selling is forbidden. 
These empirical observations reminds us of being cautious of using high leverage and short position on our investment.

\begin{table}[h]
    \centering
    \begin{tabular}{|l|c c|c c c|c c c|}
    \toprule
    Training & 
    \multicolumn{2}{c|}{Investment Setting} &
    \multicolumn{3}{c|}{EMVRS} &
    \multicolumn{3}{c|}{EMV} \\
    & AC & SS & Mean  & Volatility & SR  & Mean  & Volatility & SR \\
    \midrule
    \multirow{8}{*}{OC 
    Learning} & 
    1 & \cmark & 12.177\% & 1.932\% & 5.9269 & 3.507\% & 1.673\% & 1.6650 \\
    & 1 & \xmark & 12.177\% & 1.932\% & 5.9269 & 3.690\% & 0.857\% & 3.4594 \\
    & 1.5 & \cmark & 15.512\% & 2.890\% & 5.1185 & 3.331\% & 2.419\% & 1.0787 \\
    & 1.5 & \xmark & 15.512\% & 2.890\% & 5.1185 & 3.927\% & 1.246\% & 2.5709 \\
    & 2 & \cmark & 18.156\% & 3.849\% & 4.5318 & 3.762\% & 3.226\% & 0.9420 \\
    & 2 & \xmark & 18.156\% & 3.849\% & 4.5318 & 3.877\% & 1.736\% & 1.8173 \\
    & 3 & \cmark & 22.244\% & 5.771\% & 3.7300 & 3.624\% & 4.917\% & 0.5901 \\
    & 3 & \xmark & 22.244\% & 5.771\% & 3.7300 & 4.066\% & 2.176\% & 1.5404 \\
    \midrule 
    \multirow{8}{*}{TD Learning} & 
    1 & \cmark & 7.319\% & 18.439\% & 0.3577 & 3.426\% & 1.512\% & 1.7800 \\
    & 1 & \xmark & 8.694\% & 14.322\% & 0.5566 & 3.690\% & 0.844\% & 3.5119 \\
    & 1.5 & \cmark & 8.342\% & 21.519\% & 0.3541 & 3.406\% & 2.511\% & 1.0693 \\
    & 1.5 & \xmark & 10.002\% & 17.320\% & 0.5358 & 3.937\% & 1.246\% & 2.5827 \\
    & 2 & \cmark & 8.759\% & 23.309\% & 0.3448 & 3.590\% & 3.365\% & 0.8521 \\
    & 2 & \xmark & 10.720\% & 18.167\% & 0.5503 & 3.887\% & 1.727\% & 1.8309 \\
    & 3 & \cmark & 9.095\% & 25.080\% & 0.3338 & 3.117\% & 4.345\% & 0.5510 \\
    & 3 & \xmark & 11.394\% & 19.223\% & 0.5551 & 4.081\% & 2.188\% & 1.5344 \\
    \bottomrule
    \end{tabular}
    \caption{Annualized Mean and Volatility of Portfolio Returns and Sharpe Ratios on the 24 10-year Windows. (AC: action constraint; SS: short selling; SR: Sharpe Ratio; target annual portfolio return: $3.422\%$)}
    \label{tab:performance-realdata}
\end{table}

\section{Conclusion} \label{sec:conclusion}

Inspired by the past works on the regime-switching MV problem and the recent RL advances in stochastic control, we extended the classical MV portfolio optimization problem and formulated the EMVRS problem. We recognize that the stochastic control approach to the MV problems is optimal given that the market parameters are fully known. However, in practice the market parameters cannot be directly observed, which drives us to explore within the control space. Here, the RL framework offers an informed guidance of exploration within the control space, which is a Gaussian exploration with entropy regularization. Therefore, a combination of the RL framework and the stochastic control solution results in a collaborative exploration and exploitation of the optimal control policy. In this work, we derived an analytical solution to the EMVRS problem, which fills the gap between the Regime-Switching application in the stochastic control literature and the RL framework for the non-Regime-Switching MV problem. 

Furthermore, we fully leveraged the optimal stochastic control solution to the EMVRS problem, by adopting the functional form of the induced optimal value function and reparameterizing the value function with the market parameters. This differs from the approach in \cite{wang2020continuous}, which reparameterizes the value function with a new set of parameters. Incorporating the induced optimal value function provides an informed and theory-backed starting point for the RL training algorithm.
Using the market parameters as the parameterization of the RL algorithm reduces the dimension of the parameter space and produces more meaningful outputs when the parameter estimates have converged. 



By adopting OC learning, our RL algorithm enables EMVRS to successfully recover the hidden market parameters in our simulation studies. The improved investment performance of EMVRS on the real market data also supports our proposed methods in the practical domain. 

\bigskip
We conclude with a potential direction for future work. We mentioned three major sources of randomness in the EMVRS formulation --- the market dynamics, the exploration of the investment strategy and the market regimes. While the market regime is modelled on a Markov Chain, the dynamics of the market (Eq. \ref{eq:problem-formulation-stock-dynamics} and \ref{eq:problem-formulation-bond-dynamics}) and the portfolio value process (Eq. \ref{eq:problem-formulation-wealth-process-classic}) are driven by the same Brownian Motion. This could potentially be generalized to different Brownian Motions with some covariance, separating the market stochasticity from exploration randomness; See \citet{dai2023learning} for the stochasticity separation with different Brownian Motions (although this work finds an equilibrium strategy instead of a pre-commitment policy).

\newpage 

\printbibliography[title=Bibliography]

\newpage
\appendix

\section{Proofs and Derivations}

\subsection{Derivation of Eq. \ref{eq:EMVRS-wealth-process-exploratory}}
We herein derive Eq. \ref{eq:EMVRS-wealth-process-exploratory} from Eq.  \ref{eq:problem-formulation-wealth-process-classic}, inspired by the exploratory formulation arguments in \citep{wang2020reinforcement}.
Recall the original portfolio value dynamic without the exploratory extension (Eq. \ref{eq:problem-formulation-wealth-process-classic})
\begin{equation*}
    d X^u_t = \left[r(t, \alpha_t) X^u_t + \rho(t, \alpha_t) \sigma(t, \alpha_t) u_t \right] dt + \sigma(t, \alpha_t) u_t dW_t.
\end{equation*}
With the exploratory extension, we know that each $u_t$ at time $t$ is sampled from the policy distribution $\pi_t(\cdot|\alpha_t)$ given the regime $\alpha_t$. Consider simulating a path of portfolio values $\{X_t\}_{t\in[0,T]}$ and a path of investment controls $\{u_t\}_{t\in[0,T]}$. Then for any time $t \in [0,T]$ and a small time interval $\Delta t$, 
\begin{align*}
    \Delta X_t = X_{t+\Delta t} - X_t \approx \left[r(t, \alpha_t) X_t + \rho(t, \alpha_t) \sigma(t, \alpha_t) u_t \right] \Delta t + \sigma(t, \alpha_t) u_t (W_{t+\Delta t} - W_t).
\end{align*}
If we repetitively simulate $N$ paths (denoting the $i$-th path with a superscript $i$) and compute
\begin{align*}
    \frac{1}{N} \sum_{i=1}^N \Delta & X_t^i \approx \frac{1}{N} \sum_{i=1}^N \left[r(t, \alpha_t) X_t^i + \rho(t, \alpha_t) \sigma(t, \alpha_t) u_t^i \right] \Delta t + \frac{1}{N} \sum_{i=1}^N \sigma(t, \alpha_t) u^i_t (W^i_{t+\Delta t} - W^i_t) \\
    & \overset{a.s.}{\to} \mathbb{E} \left[\int_\mathcal{U} \left[r(t, \alpha_t) X_t^\pi + \rho(t, \alpha_t) \sigma(t, \alpha_t) u \right] \pi(u|\alpha_t) du \Delta t \right] 
    + \mathbb{E}\left[\int_\mathcal{U} \sigma(t, \alpha_t) u \pi(u|\alpha_t) du \right] \mathbb{E}\left[W^i_{t+\Delta t} - W^i_t\right] \\ 
    & = \mathbb{E}\left[r(t, \alpha_t) X_t^\pi + \int_\mathcal{U} \rho(t, \alpha_t) \sigma(t, \alpha_t) u  \pi(u|\alpha_t) du \right] \Delta t\\
    \frac{1}{N} \sum_{i=1}^N & \left(\Delta X_t^i\right)^2 \approx \frac{1}{N} \sum_{i=1}^N \left(\sigma(t, \alpha_t) u^i_t \right)^2 \Delta t 
    \overset{a.s.}{\to} \mathbb{E} \left[\int_\mathcal{U} \sigma^2(t, \alpha_t) u^2 \pi(u|\alpha_t)  du \right] \Delta t,
\end{align*}
where the almost sure convergence, $\overset{a.s.}{\rightarrow}$, follows from the Law of Large Numbers. Moreover, since the sample paths of portfolio values are randomly and independently simulated from its distribution $X^\pi$, the Law of Large Numbers again implies
\begin{align*}
    \frac{1}{N} \sum_{i=1}^N \Delta X_t^i & \overset{a.s.}{\to} \mathbb{E} \left[\Delta X^\pi_t \right], \\
    \frac{1}{N} \sum_{i=1}^N \left(\Delta X_t^i\right)^2 & \overset{a.s.}{\to}  \mathbb{E} \left[\left(\Delta X_t^\pi \right)^2\right]. 
\end{align*}
This motivates us to formulate the exploratory version of the portfolio value process
\begin{align*} 
    \begin{split}
    d X^\pi_t & = \left[r(t, \alpha_t) X^\pi_t + \int_{\mathcal{A}} \rho(t, \alpha_t) \sigma(t, \alpha_t) \cdot u \cdot \pi_t(u|\alpha_t) du \right] dt + \left(\sqrt{\int_\mathcal{A} \sigma^2(t, \alpha_t) u^2 \cdot \pi_t(u|\alpha_t)  du} \right) dW_t.
    \end{split}
\end{align*}

\subsection{Proof of Theorem \ref{thm:EMVRS-solution}}

We first state and prove the following lemma, which will be used in the proof of Theorem \ref{thm:EMVRS-solution}. 

\begin{lemma} \label{lm:solving-general-ODE-for-C(t,i)-D(t,i)}
For time $t \in [0,T]$ and Markovian regime $i \in \{1, \cdots, l\}$, let $f(t, i): [0,T] \mapsto \mathbb{R}$ and $G(t, i): [0,T] \mapsto \mathbb{R}$ be two $\{\mathcal{F}_t\}_{t\in[0,T]}$-measurable functions such that, for any regime $i \in \{1, \cdots, l\}$,
\begin{itemize}
    \item[(i)] $f(\cdot, i)$ is continuous in time $t \in [0,T]$;
    \item[(ii)] $G(\cdot, i) \in \mathbb{C}^{1}([0,T])$, i.e., $G(t,i)$ is continuously differentiable with respect to $t \in [0,T]$.
\end{itemize} 
We further suppose $f(t,i)$ and $G(t, i)$ also satisfy the following ODE with a terminal condition: 
\begin{equation*}
\begin{cases}
    \dot{G}(t,i) = - f(t,i) - \sum_{j=1}^l q_{ij} G(t,j) \\
    G(T, i) = 0
\end{cases}
\end{equation*}
where $q_{ij}$ is the $(i,j)$-entry of the Markov Chain generator matrix for the regime. 
Then, for $0 < t' \leq t \leq T$, it can be solved by $G(t, \alpha_{t'}) = \mathbb{E} \left[\left. \int_t^T f(s, \alpha_s) ds  \right| \alpha_{t'}\right]$, where the Markovian regime $\{\alpha_t\}_{t \in [0,T]} \in \{1, \cdots, l\}$ is defined in Section \ref{sec:problem-formulation}.
\end{lemma}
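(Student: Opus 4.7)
The plan is to recognize the given ODE as the backward Kolmogorov equation associated with the pure-jump Markov chain $\alpha$ and identify $G$ via a Feynman--Kac-type martingale. Concretely, I would introduce
\[
M_t \;:=\; G(t,\alpha_t) \;+\; \int_0^t f(s,\alpha_s)\,ds, \qquad t \in [0,T],
\]
and prove that $M=\{M_t\}_{t\in[0,T]}$ is an $\{\mathcal{F}_t\}$-martingale. The conclusion will then follow by taking conditional expectations and invoking the terminal condition $G(T,\cdot)=0$ together with the Markov property.

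The first step is to apply the Dynkin/It\^o formula to $G(t,\alpha_t)$, using that the infinitesimal generator of the finite-state chain $\alpha$ acts on functions by $(\mathcal{L}G)(t,i)=\sum_{j=1}^{l} q_{ij} G(t,j)$. Since $G(\cdot,i)\in C^{1}([0,T])$ for each $i$, we obtain
\[
dG(t,\alpha_t) \;=\; \dot{G}(t,\alpha_t)\,dt \;+\; \sum_{j=1}^{l} q_{\alpha_t,\,j}\,G(t,j)\,dt \;+\; dN_t,
\]
where $N$ is the martingale attached to the compensated jump measure of $\alpha$; this decomposition is standard for finite-state chains because the compensator of the jump measure is entirely determined by the generator $Q$. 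Adding the contribution $f(t,\alpha_t)\,dt$ from the integral term and substituting the ODE $\dot{G}(t,\alpha_t) = -f(t,\alpha_t) - \sum_{j}q_{\alpha_t j}G(t,j)$ makes the drift in $dM_t$ cancel identically, so $dM_t=dN_t$ and $M$ is a local martingale. Boundedness of $G$ and $f$ on the compact set $[0,T]\times\{1,\dots,l\}$, which follows from continuity of $f$ and $C^{1}$-regularity of $G$ on a finite state space, upgrades $M$ to a true (square-integrable) martingale on $[0,T]$.

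The second step is a one-line rearrangement. From $\mathbb{E}[M_T\mid\mathcal{F}_t]=M_t$ and $G(T,\alpha_T)=0$,
\[
G(t,\alpha_t) \;=\; \mathbb{E}\!\left[\int_{t}^{T} f(s,\alpha_s)\,ds \,\Big|\, \mathcal{F}_t\right] \;=\; \mathbb{E}\!\left[\int_{t}^{T} f(s,\alpha_s)\,ds \,\Big|\, \alpha_t\right],
\]
where the last equality uses that the integrand is $\sigma(\{\alpha_s\}_{s\ge t})$-measurable together with the Markov property of $\alpha$. Conditioning further on $\mathcal{F}_{t'}$ for $t'\le t$ and applying the tower property yields the version stated in the lemma.

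The only delicate point is the It\^o/Dynkin decomposition for the jump chain with a time-dependent integrand, but this is a classical result (see, e.g., Br\'emaud or Elliott--Aggoun--Moore); once it is invoked, the remainder of the argument is a direct cancellation using the ODE and the terminal condition. An alternative, equally short route is to verify by differentiating under the integral sign that the right-hand side $\mathbb{E}[\int_t^T f(s,\alpha_s)\,ds\mid \alpha_t=i]$ satisfies the same ODE with the same terminal condition, and then appeal to uniqueness of solutions to the linear backward system.
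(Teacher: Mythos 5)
Your proof is correct but follows a genuinely different route from the paper's. The paper goes in the ``verification'' direction: it takes the candidate $G(t,\alpha_{t'}) = \mathbb{E}\left[\left.\int_t^T f(s,\alpha_s)\,ds\right|\alpha_{t'}\right]$ and checks by hand that it satisfies the ODE, computing $\dot{G}$ as a difference quotient, splitting $G(t+\Delta t,\alpha_{t+\Delta t})-G(t,\alpha_t)$ into a regime-change part and a time-shift part, applying the tower property to write the former as $\sum_j p_{\alpha_t j}(\Delta t)\,G(t+\Delta t,j)$, and passing to the limit via $q_{ij}=\lim_{\Delta t\to 0+}p_{ij}(\Delta t)/\Delta t$ and $\sum_j q_{ij}=0$. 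You instead go in the ``representation'' direction: assuming a $C^1$ solution $G$ of the ODE, you show that $M_t = G(t,\alpha_t)+\int_0^t f(s,\alpha_s)\,ds$ is a martingale (drift cancellation via the Dynkin decomposition for the jump chain) and recover the formula from $\mathbb{E}[M_T\mid\mathcal{F}_t]=M_t$ and the terminal condition. Your route is shorter once the Dynkin/compensated-jump-measure formula is granted, and it buys something the paper's verification does not: uniqueness, i.e., \emph{any} $C^1$ solution must coincide with the conditional expectation. The paper's route is more elementary and self-contained, using only the semigroup property and the definition of the generator, which matches the first-principles style of the rest of the appendix; your closing remark about differentiating under the integral sign and invoking uniqueness of the linear backward system is essentially the paper's argument. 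One small caution: your final ``conditioning further on $\mathcal{F}_{t'}$'' step produces $\mathbb{E}[G(t,\alpha_t)\mid\alpha_{t'}]$ rather than $G(t,\alpha_{t'})$, which are different for $t'<t$; this is really an imprecision in the lemma's statement (the paper's own proof works with $t'=t$ as well), but you should state the conclusion for $t'=t$ rather than claim the tower property delivers the $t'<t$ version.
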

\begin{proof}
    For $0 < t' \leq t \leq T$, let $G(t, \alpha_{t'}) = \mathbb{E} \left[\left. \int_t^T f(s, \alpha_s) ds  \right| \alpha_{t'}\right]$. We verify that it satisfies the ODE system in \ref{lm:solving-general-ODE-for-C(t,i)-D(t,i)}. First, the boundary condition is trivially satisfied, regardless of what regime $\alpha_{t'}$ is. To verify the first equation of the ODE, we notice that $\dot{G}$ can be expressed as a limit:
    $$\dot{G}(t, \alpha_t) = \lim_{\Delta t \to 0} \frac{1}{\Delta t} \left[G(t+\Delta t, \alpha_{t+\Delta t}) - G(t,\alpha_t)\right],$$
    where $\Delta t$ is a small time interval. Moreover, on the right hand side, we have:  
    \begin{align*}
    G & (t+\Delta t, \alpha_{t+\Delta t}) - G(t,\alpha_t)  \\
    & = G(t+\Delta t, \alpha_{t+\Delta t}) - G(t+\Delta t, \alpha_t) + G(t+\Delta t, \alpha_t) - G(t,\alpha_t) \\
    & = G(t+\Delta t, \alpha_{t+\Delta t}) 
    - \mathbb{E} \left[ \left.\int_{t+\Delta t}^{T} f(s, \alpha_s) ds \right| \alpha_t \right]  \\
    & + \mathbb{E} \left[\left. \int_{t+\Delta t}^{T} f(s, \alpha_s) ds \right| \alpha_t \right] 
    - \mathbb{E} \left[\left. \int_{t}^{T} f(s, \alpha_s) ds \right| \alpha_t\right] \\
    & = G(t+\Delta t, \alpha(t+\Delta t)) 
    - \mathbb{E} \left[ \left.\int_{t+\Delta t}^{T} f(s, \alpha_s) ds \right| \alpha(t) \right]  
    + \mathbb{E} \left[\left. - \int_{t}^{t+\Delta t} f(s, \alpha_s) ds \right| \alpha_t \right].
    \end{align*}
    Applying the tower property of conditional expectations yields:
    \begin{align*}
        \mathbb{E} \left[ \left.\int_{t+\Delta t}^{T} f(s, \alpha_s) ds \right| \alpha_t \right]
        & = \mathbb{E} \left[ \mathbb{E} \left[ \left. \left. \int_{t+\Delta t}^{T} f(s, \alpha_s) ds \right| \alpha_{t+\Delta t} \right] \right| \alpha_t\right]  \\
        & = \sum_{j=1}^l p_{\alpha_t j}(\Delta t) \mathbb{E} \left[ \left. \int_{t+\Delta t}^{T} f(s, \alpha_s) ds \right| \alpha_{t+\Delta t}=j \right] \\
        & = \sum_{j=1}^l p_{\alpha_t j}(\Delta t) G(t+\Delta t,j), 
    \end{align*}
    where $p_{\alpha_t j} = \mathbb{P}(\alpha_{t+\Delta t} = j | \alpha_t)$. 
    Therefore, 
    \begin{align*}
    G & (t+\Delta t, \alpha_{t+\Delta t}) - G(t,\alpha_t)  \\
    & = 
    - \sum_{j=1}^l p_{\alpha_t j}(\Delta t) \left[ G(t+\Delta t,j) - G(t+\Delta t, \alpha_{t+\Delta t}) \right]
    - \mathbb{E} \left[\left. \int_{t}^{t+\Delta t} f(s, \alpha_s) ds \right| \alpha_t \right],
    \end{align*}
    as $\sum_{j=1}^l p_{\alpha_t j}(\Delta t) = 1$. 
    Finally, divide both sides by $\Delta t$ and take the limit of $\Delta t \to 0+$. The left hand side becomes $\dot{G}$. On the right hand side, for $\alpha_t \neq j$, 
    \begin{align*}
        \lim_{\Delta t \to 0+} \frac{p_{\alpha_t j}(\Delta t)}{\Delta t} = q_{\alpha_t j}
    \end{align*}
    and when $\alpha_t=j$, $\lim_{\Delta t \to 0+} G(t + \Delta t, j) - C(t+\Delta t, \alpha_{t+\Delta t}) = 0$. This yields: 
    \begin{align*}
        \dot{G}(t,\alpha_t) & = - f(t,\alpha_t) - \sum_{j=1}^l q_{\alpha_t j} [G(t,j) - G(t,\alpha_t)] \\ 
        & = - f(t,\alpha_t) - \sum_{j=1}^l q_{\alpha_t j} G(t,j), 
    \end{align*}
    where the last equation is given by $\sum_{j=1}^l q_{\alpha_t j} = 0$. 
\end{proof}

Now, we are ready to prove Theorem \ref{thm:EMVRS-solution}.

\begin{proof}
The optimal policy distribution Eq. \ref{eq:EMVRS-optimal-policy-distribution} can be easily derived from the optimal value function Eq. \ref{eq:EMVRS-optimal-value-function}, following Eq. \ref{eq:EMVRS-optimal-policy-distribution-raw}. Moreover, we notice that the optimal value function can be rewritten as a quadratic function of $(\lambda -z)$:
\begin{equation*}
    V^*(t,x,i) = [P(t,i) H(t,i)^2 + C(t,i) -1](\lambda - z)^2 + 2[P(t,i)H(0,i) x - z](\lambda-z) + P(t,i) x^2 + D(t,i) - z^2.
\end{equation*}
This yields the optimal Lagrange multiplier at initialization $t=0$, which is the minimizer of $V^*(0, x_0, i_0)$, for some initial wealth $x_0 > 0$ and initial regime $i_0 \in \{1, \cdots, l\}$.  

The remainder of the proof is to verify that Eq. \ref{eq:EMVRS-optimal-value-function} solves the reduced HJB equation (Eq. \ref{eq:EMVRS-HJB}). We note that the partial derivatives of $V^*$ are 
\begin{align*}
    V^*_t(t,x,i) & = \dot{P}(t,i)[x + (\lambda-z)H(t,i)]^2 + 2(\lambda-z) P(t,i) [x + (\lambda-z)H(t,i)] \cdot \dot{H}(t,i) \\
    & + (\lambda-z)^2 \dot{C}(t,i) + \dot{D}(t,i), \\
    V^*_x(t,x,i) & = 2P(t,i) [x + (\lambda-z)H(t,i)], \\
    V^*_{xx}(t,x,i) & = 2P(t,i).
\end{align*}
Substituting these into the left hand side of Eq. \ref{eq:EMVRS-HJB} yields
\begin{equation}
\begin{split}
    & \dot{P}(t,i)[x + (\lambda-z)H(t,i)]^2 + 2(\lambda-z) P(t,i) [x + (\lambda-z)H(t,i)] \cdot \dot{H}(t,i) + (\lambda-z)^2 \dot{C}(t,i) + \dot{D}(t,i) \\
     & + \sum_{j=1}^l q_{ij} \left\{P(t,j)[x + (\lambda-z) H(t,j)]^2 + (\lambda-z)^2 C(t,j) + D(t,i) - \lambda^2 \right\} \\
     & + 2 r(t,i)x P(t,i) [x + (\lambda-z)H(t,i)] \\
     & - \rho^2(t,i) P(t,i)[x + (\lambda-z) H(t,i)]^2 - \frac{\xi}{2} \log \left(\frac{ \pi \xi}{\sigma^2(t,i) P(t,i)}\right).
\end{split}
\end{equation}
Adding 3 lines has no effect, but helps isolating $P, H, C, D$: 
\begin{equation}
\begin{split}
    & \dot{P}(t,i)[x + (\lambda-z)H(t,i)]^2 + 2(\lambda-z) P(t,i) [x + (\lambda-z)H(t,i)] \cdot \dot{H}(t,i) + (\lambda-z)^2 \dot{C}(t,i) + \dot{D}(t,i) \\
     & + \sum_{j=1}^l q_{ij} \left\{P(t,j)[x + (\lambda-z) H(t,j)]^2 + (\lambda-z)^2 C(t,j) + D(t,j)\right\} \\ 
     & {  + \sum_{j=1}^l q_{ij} P(t,j) [x + (\lambda-z) H(t,i)]^2 
     - \sum_{j=1}^l q_{ij} P(t,j) [x + (\lambda-z) H(t,i)]^2 } \\
     & {  + \sum_{j=1}^l q_{ij} P(t,j) \left[ 2(\lambda-z)^2 H(t,i)(H(t,j) - H(t,i)) - 2(\lambda-z)^2 H(t,i)(H(t,j) - H(t,i)) \right]} \\
     & + 2 r(t,i)x P(t,i) [x + (\lambda-z)H(t,i)] \\
     & {  + 2 r(t,i) (\lambda -z) P(t,i) [x + (\lambda-z)H(t,i)] H(t,i) - 2 r(t,i) (\lambda -z) P(t,i) [x + (\lambda-z)H(t,i)] H(t,i)} \\
     & - \rho^2(t,i) P(t,i)[x + (\lambda-z) H(t,i)]^2 - \frac{\xi}{2} \log \left(\frac{\pi \xi}{\sigma^2(t,i) P(t,i)}\right). 
\end{split}
\end{equation}
Rearranging and grouping gives 
\begin{equation}
\begin{split}
    & \left\{\dot{P}(t,i) - (\rho^2(t,i) - 2r(t,i)) P(t,i) + \sum_{j=1}^l q_{ij}P(t,j) \right\} \cdot [x + (\lambda-z) H(t,i)]^2 \\
    + & \left\{\dot{H}(t,i) - r(t,i) H(t,i) + \frac{1}{P(t,i)} \sum_{j=1}^l q_{ij} P(t,j) (H(t,j) - H(t,i))  \right\} \\
    & \cdot 2(\lambda-z)P(t,i)[x + (\lambda-z) H(t,i)] \\ 
    + & \left\{ \dot{C}(t,i) + \sum_{j=1}^l q_{ij} \left[P(t,j) (H(t,j) - H(t,i))^2 + C(t,j)\right]  \right\} \cdot (\lambda-z)^2 \\
    + & \left \{ \dot{D}(t,i) + \sum_{j=1}^l q_{ij} D(t,j) - \frac{\xi}{2} \log \left(\frac{\pi \xi}{\sigma^2(t,i) P(t,i)}\right) \right \} = 0,
\end{split}
\end{equation}
indicating that the reduced HJB equation holds. 

We remind the reader that the above equation holds because $P(t,i), H(t,i), C(t,i), D(t,i)$ solves the following system of ODEs
\begin{align}
& \begin{cases} 
    \dot{P}(t,i) = (\rho^2(t,i) - 2r(t,i)) P(t,i) - \sum_{j=1}^l q_{ij}P(t,j) \\
    P(T,i) =1, \text{ for } i \in \{1,\cdots, l\}
\end{cases} \\
& \begin{cases} 
    \dot{H}(t,i) = r(t,i) H(t,i) - \frac{1}{P(t,i)} \sum_{j=1}^l q_{ij} P(t,j) (H(t,j) - H(t,i)) \\
    H(T,i) =1, \text{ for } i \in \{1,\cdots, l\}
\end{cases} \\
& \begin{cases} 
    \dot{C}(t,i) = - \sum_{j=1}^l q_{ij} \left[P(t,j) (H(t,j) - H(t,i))^2 + C(t,j)\right] \\
    C(T,i) = 0, \text{ for } i \in \{1,\cdots, l\}
\end{cases} \\
&\begin{cases} 
    \dot{D}(t,i) = \frac{\xi}{2} \log \left(\frac{\pi \xi}{\sigma^2(t,i) P(t,i)}\right) - \sum_{j=1}^l q_{ij} D(t,j) \\
    D(t,i)= 0, \text{ for } i \in \{1,\cdots, l\}
\end{cases}
\end{align}
\citep{zhou2003markowitz} discussed the existence and uniqueness of the solutions to Eq. \ref{eq:MVRS-P(t,i)-ODE} and \ref{eq:MVRS-H(t,i)-ODE}, which follows since they are linear with uniformly bounded coefficients. Applying Lemma \ref{lm:solving-general-ODE-for-C(t,i)-D(t,i)}, we can explicitly solve the ODEs for $C(t,i)$ and $D(t,i)$:
\begin{align}
\begin{split}
    C(t,i) & = \mathbb{E} \left[\left. \int_t^T \sum_{j=1}^l q_{\alpha(s) j} P(s,j) (H(s,j) - H(s, \alpha(s)))^2 ds \right| \alpha(t) = i \right] \\
    & = \sum_{m=1}^l \sum_{j=1}^l  \int_t^T p_{im}(s-t) q_{mj} P(s,j) (H(s,j) - H(s, m))^2 ds,
\end{split} \\
\begin{split}
    D(t,i) & = - \mathbb{E} \left[ \left. \int_t^T \frac{\xi}{2} \log \left(\frac{\xi\pi}{\sigma(s, \alpha(s)) P(s, \alpha(s))}\right) ds \right| \alpha(t) = i \right] \\
    & = - \sum_{m=1}^l \int_t^T p_{im}(s-t) \frac{\xi}{2} \log \left(\frac{ \pi \xi}{\sigma^2(s, m) P(s, m)}\right) ds.
\end{split}
\end{align}

Lastly, we notice that $V^*$ is indeed smooth as all of its components $P, H, C, D$ are smooth functions. 
\end{proof}

\subsection{Proof of Theorem \ref{thm:EMVRS-PIT}}
\begin{proof}
Consider $0 < t < s < T, X^\pi_t = x, \alpha_t = i$ and suppose $V^\pi$ is smooth. Let $\{X^{\pi^*, t, x}_k\}_{k \in [t,T]}$ be the wealth process that follows policy $\boldsymbol{\pi^*}$ and starts with wealth $x$ at time $t$. Then, by It\^o's Formula, 
\begin{equation} \label{eq:PIT-ito-formula-on-V}
\begin{split}
    & V^\pi(s, X^{\pi^*,t,x}_s, \alpha_s) - V^\pi(t,x,i) \\
    & = \int_t^s \Biggl[ V_t^\pi(k, X^{\pi^*, t, x}_k, \alpha_k) 
    + V_x^\pi(k, X^{\pi^*,t,x}_k, \alpha_k) r(k, \alpha_k) X^{\pi^*, t,x}_k  + \sum_{j=1}^l q_{ij} V^\pi(k, X^{\pi^*,t,x}_k, j) \Biggr] dk \\
    & + \int_t^s \int_{\mathcal{A}} \Biggl[ \frac{1}{2} V_{xx}^\pi (k, X^{\pi^*,t,x}_k, \alpha_k) \sigma^2(k, \alpha_k) u^2 + V_x^\pi(k, X^{\pi^*,t,x}_k, \alpha_k) \rho(k, \alpha_k) \sigma(k, \alpha_k) u \Biggr] \pi_k(u|\alpha_k) du dk.
\end{split}
\end{equation}
As an admissible policy $\boldsymbol{\pi}$, we know that its corresponding {\it value function} $V^\pi$ has the recursive form (Eq. \ref{eq:EMVRS-DPP-recursive-form-of-Value-function}), by the DPP. Under the smoothness assumption on $V^\pi$, we apply It\^o's Formula which yields the following HJB equation
\begin{align*}
    & V_t^\pi (t,x,i) + V_x^\pi(t,x,i) r(t,i) x + \sum_{j=1}^l q_{ij} V^\pi(t,x,i) \\
    & + \int_{\mathcal{A}} \left[\frac{1}{2} V_{xx}^\pi(t,x,i) \sigma^2(t,i) u^2 + V_x^\pi(t,x,i) \rho(t,i) \sigma(t,i) u + \xi \log\pi_t(u|i) \right] \pi_t(u|i) du = 0,
\end{align*}
for $(t,x) \in [0,T] \times \mathbb{R}$ and $i \in \{1, \cdots, l\}$.
Moreover, since $\boldsymbol{\pi^*}$ is constructed from $V^\pi$ through Eq. \ref{eq:PIT-construction-of-pi*}, 
and $\forall (t,x) \in [0,T] \times \mathbb{R}, i \in \{1, \cdots, l\}$,
\begin{align*}
    \pi^*_t(\cdot|i) = \underset{\pi'(\cdot|i) \in \mathcal{A}^\pi}{\arg \min} \int_{\mathcal{A}} \left[\frac{1}{2} V_{xx}^\pi(t,x,i) \sigma^2(t,i) u^2 + V_x^\pi(t,x,i) \rho(t,i) \sigma(t,i) u + \xi \log\pi'(u|i) \right] \pi'(u|i) du.
\end{align*}
Hence, $\forall (t,x) \in [0,T] \times \mathbb{R}, i \in \{1, \cdots, l\}$,
\begin{align*}
    & V_t^\pi (t,x,i) + V_x^\pi(t,x,i) r(t,i) x + \sum_{j=1}^l q_{ij} V^\pi(t,x,i) \\
    & + \int_{\mathcal{A}} \left[\frac{1}{2} V_{xx}^\pi(t,x,i) \sigma^2(t,i) u^2 + V_x^\pi(t,x,i) \rho(t,i) \sigma(t,i) u + \xi \log\pi^*_t(u|i) \right] \pi^*_t(u|i) du \leq 0.
\end{align*}
Substituting this back into Eq \ref{eq:PIT-ito-formula-on-V} yields
\begin{align*}
    & V^\pi(s, X^{\pi^*,t,x}(s), \alpha(s)) - V^\pi(t,x,i) \leq - \xi \int_t^s \int_{\mathcal{A}} \pi^*_k(u|\alpha(k)) \log\pi^*_k(u|\alpha(k)) du dk \\
    \implies & V^\pi(t,x,i) \geq V^\pi(s, X^{\pi^*,t,x}(s), \alpha(s)) + \xi \int_t^s \int_{\mathcal{A}} \pi^*_k(u|\alpha(k)) \log\pi^*_k(u|\alpha(k)) du dk.
\end{align*}
Setting $s = T$ and taking expectation on both sides conditioned on $X_t = x, \alpha_t = i$, we have: 
\begin{align*}
    & \mathbb{E} \left[ V^\pi(t,x,i) | X_t = x, \alpha_t = i \right] \\
    & \geq \mathbb{E}\left[ \left.
    V^\pi(T, X^{\pi^*,t,x}_T, \alpha_T) + \xi \int_t^T \int_{\mathcal{A}} \pi^*_k(u|\alpha_k) \log\pi^*_k(u|\alpha_k) du dk \right| X_t = x, \alpha_t = i \right] \\
    & = \mathbb{E}\left[ \left.
    V^{\pi^*}(T, X^{\pi^*,t,x}_T, \alpha_T) + \xi \int_t^T \int_{\mathcal{A}} \pi^*_k(u|\alpha_k) \log\pi^*_k(u|\alpha_k) du dk \right| X_t = x, \alpha_t = i \right] \\
    & = V^{\pi^*}(t,x,i).
\end{align*}
The second to the last equation is because $V^\pi(T, x, i) = V^{\pi^*}(T,x,i) = (x+(\lambda-z))^2 - \lambda^2$, for all $x \in \mathbb{R}, i \in \{1,\cdots, l\}$, while the last equation is implied by Bellman's Principle of Optimality. Finally, note that the left-hand-side is $V^\pi(t,x,i)$, which concludes the proof. 
\end{proof}

\subsection{Proof of Corollary \ref{thm:EMV-solution}}
\begin{proof}
    By Theorem \ref{thm:EMVRS-solution}, the EMV problem \ref{pb:EMV-problem} 
    has an optimal value function taking the same form as Eq. \ref{eq:EMVRS-optimal-value-function}: 
\begin{equation}
\begin{split}
     V^*(t,x) = P(t)[x+(\lambda-z)H(t)]^2 + (\lambda-z)^2 C(t) + D(t) - \lambda^2,
\end{split}
\end{equation}
where $P(t,i), H(t,i), C(t,i), D(t,i)$ solve the following system of ODEs, which are simplied from Eq. \ref{eq:MVRS-P(t,i)-ODE}, \ref{eq:MVRS-H(t,i)-ODE}, \ref{eq:EMVRS-C(t,i)} and \ref{eq:EMVRS-D(t,i)}: 
\begin{align}
& \begin{cases} 
    \dot{P}(t) = (\rho^2 - 2r) P(t) \\
    P(T) =1, \text{ for } i \in \{1,\cdots, l\}
\end{cases} \\
& \begin{cases} 
    \dot{H}(t) = r(t) H(t) \\
    H(T) =1, \text{ for } i \in \{1,\cdots, l\}
\end{cases} \\
& \begin{cases} 
    \dot{C}(t) = 0\\
    C(T) = 0, \text{ for } i \in \{1,\cdots, l\}
\end{cases} \\
&\begin{cases} 
    \dot{D}(t) = \frac{\xi}{2} \log \left(\frac{\xi \pi}{\sigma^2 P(t)}\right)\\
    D(t)= 0, \text{ for } i \in \{1,\cdots, l\}
\end{cases}
\end{align}
because $\{q_{ij}\}_{i,j = 1,\cdots,l} = 0$ when there is no regime switching. These ODEs can be easily solved: 
\begin{align}
    & P(t) = e^{-(\rho^2-2r)(T-t)}, \\
    & H(t) = e^{-r(T-t)}, \\
    & C(t) = 0, \\
    & D(t) = \frac{\xi (\rho^2-2r)}{4}(T^2-t^2) - \frac{\xi}{2} \left[(\rho^2-2r) T - \log \frac{\sigma^2}{\pi\xi}\right] (T-t).
\end{align}
The optimal policy distribution can be easily derived through Eq. \ref{eq:PIT-construction-of-pi*}, and $\lambda^*$ follows from Eq. \ref{eq:EMVRS-optimal-lambda} in Theorem \ref{thm:EMVRS-solution}.
\end{proof}

\end{document}